\newtheorem{theorem}{Theorem}
\newtheorem{remark}{Remark}
\newcommand{\nn}{\nonumber}
\begin{document}
\title{Tunable Non-Gaussianity and Exact Higher-Order Coherences for Quantum Advantage
}


\author{Arash Azizi}
	\affiliation{Texas A\&M University, College Station, TX 77843}

\begin{abstract}
Non-Gaussian states are essential for achieving a quantum advantage in continuous-variable (CV) information processing. Among these, coherent superpositions of squeezed states are a foundational resource. While exact higher-order statistics are available in the undisplaced case, a complete and analytically tractable treatment \emph{with a common displacement} has been missing. We introduce and solve the displaced Janus state—a coherent superposition of two squeezed coherent states that share the same displacement—and develop an analytical framework, based on a family of Generalized Squeezing Polynomials, that yields closed-form expressions for arbitrary-order factorial moments and coherence functions \(g^{(k)}(0)\), the full Wigner function, and the quantum Fisher information. The analysis shows how interference at a fixed mean, driven by a mismatch of the component covariances rather than by mean separation, can be precisely engineered to transform the extreme photon bunching of the constituents into strong sub-Poissonian and even perfect multiphoton suppression. We further provide a rigorous quantum Fisher information analysis, proving that parameters encoded by linear generators (for example, the number operator) are bounded by the standard quantum limit, whereas parameters encoded by quadratic generators (for example, squeezing transformations) achieve Heisenberg-limited scaling. Together, these results furnish a complete analytical toolkit for a versatile class of non-Gaussian states, establishing the displaced Janus state as a key primitive for hybrid quantum protocols, quantum metrology, and fault-tolerant continuous-variable computation.
\end{abstract}

\maketitle

\section{Introduction}

Continuous–variable (CV) quantum optics provides a powerful platform for quantum information, with Gaussian states and operations enabling scalable protocols in computation and sensing~\cite{Mandel_Wolf1995,Scully_Zubairy1997,Schleich2001Phase_space,Barnett_Radmore2002,weedbrook2012gaussian,braunstein2005quantum,Agarwal2012book}. However, purely Gaussian resources are efficiently classically simulable and therefore cannot yield universal quantum advantage~\cite{ohliger2010limitations,Mari2012Wigner,Veitch2012}. Overcoming this limitation requires non-Gaussian ingredients, central to hybrid continuous–variable/discrete–variable (CV/DV) approaches~\cite{Andersen2015Hybrid}, where non-Gaussianity—often correlated with phase-space negativity—provides computational “magic” and metrological gains~\cite{dodonov2002nonclassical,Kenfack2004,Mari2012Wigner,Veitch2012,GenoniParis2010,Albarelli2018Resource,TakagiZhuang2018,Chabaud2021,Chabaud2024phasespace}.

Important non-Gaussian resources include photon-added and photon-subtracted states—classic photon-added coherent states already exhibit sub-Poissonian statistics, a singular Glauber–Sudarshan \(P\) function, and Wigner-function negativity~\cite{Agarwal_Tara1991,Zavatta2004,Parigi2007}—as well as Schr\"odinger-cat states~\cite{Ourjoumtsev2007,Vlastakis2013} and Gottesman–Kitaev–Preskill (GKP) grid states for fault tolerance~\cite{GKP2001,CampagneIbarcq2020,Menicucci2006,Gu2009CV,Andersen2015Hybrid}.

A foundational route to engineering non-Gaussianity is the coherent superposition of Gaussian states. For squeezed states, this direction traces to the seminal work by Sanders on superposed squeezed vacua~\cite{Sanders1989Superposition}, with properties mapped in subsequent work~\cite{Obada1997,Othman2024} and connections made to generalized coherent-state frameworks~\cite{Fox1999Generalized,Philbin2014AmJPhys}. While the general mathematical form of a superposition of two arbitrary squeezed coherent states was later considered by Barbosa et al.~\cite{Barbosa2000Generalized}, a systematic, arbitrary-order analytical framework for calculating their full photon statistics has remained elusive. Instead, much of the rich body of research has explored parallel or more restricted scenarios. This includes extensive studies on superpositions of \emph{coherent} states~\cite{Buzek1992Superpositions,Prakash2004Maximum} (synthesized in~\cite{Sanders2012Review}) and structurally different states like ``squeezed cat states'' (which have identical squeezing but opposite displacements)~\cite{Xin1994Even}. Other practical routes to non-Gaussianity and related resources have also been investigated, including photon-added variants and oscillator control protocols~\cite{Bohloul2024,Dehghani2019Excitation}, two-mode cat-like states~\cite{Oudot2015}, and bright squeezed vacuum~\cite{Iskhakov2012Superbunched}. Even when the correct state structure was considered, analyses have typically been limited to low-order statistical indicators like Mandel’s \(Q\) and \(g^{(2)}(0)\) without providing a general method for higher-order moments~\cite{Prakash2004Maximum,Guo2022HighOrder}.

\begin{table*}[t]
\centering
\footnotesize
\setlength{\tabcolsep}{4pt}
\renewcommand{\arraystretch}{1.12}
\caption{Comparison of cat, squeezed–cat (SC), and displaced Janus (DJ) states.}
\label{tab:state_comparison}
\begin{tabular}{lccc}
\hline
& Cat & SC & DJ \\
\hline
Component covariances      & identical (vacuum)        & identical (same squeeze)      & generally unequal ($V_\xi\neq V_\zeta$) \\
Component means            & $\pm\alpha$               & $\pm\alpha$                   & common mean $\alpha$ \\
Fringe driver              & mean separation           & mean separation               & covariance mismatch \\
Negativity at $|\alpha|=0$ & no                        & no                            & yes (unequal Gaussians) \\
$\min g^{(2)}(0)$, small $\bar n$ & $0$ (odd)        & $0$ (odd)                     & $\tfrac12$ (antisymmetric, $\alpha=0$) \\
$g^{(k>2)}(0)$ control     & limited                   & limited                       & can approach $0$ \\
Gaussian–unitary equivalent to cat & n/a              & yes ($S(\xi)\ket{C_\pm}$)     & only if $V_\xi=V_\zeta$ \\
\hline
\end{tabular}
\end{table*}

The urgency for a complete theory is now underscored by recent experimental paper demonstrating the physical generation of arbitrary superpositions of nonclassical states—including differently squeezed states—in controllable platforms like trapped ions~\cite{Saner2024generating}. To begin filling the theoretical gap, we recently introduced the \emph{Janus state}—a coherent superposition of two squeezed vacuum states with arbitrary squeezing and superposition coefficients~\cite{Azizi2025Janus}. Named by analogy with the two-faced Roman deity, our analysis revealed that the superposition can be engineered to show strong antibunching, reaching the universal lower bound of $g^{(2)}(0) = 1/2$. A follow-up investigation of the state's higher-order coherences revealed that interference can be tuned to perfectly suppress multi-photon events ($g^{(k>2)} \to 0$), a phenomenon we term \emph{complementarity ad infinitum}~\cite{Azizi2025Janus_higher}. The key to this arbitrary-order analysis was the development of a new mathematical framework built on a family of \emph{Generalized Squeezing Polynomials}.

However, these foundational studies were restricted to the case of zero displacement ($\alpha=0$). Building on this foundation, we introduce and provide a full analytical solution for the \textbf{displaced Janus state (DJ)}: a coherent superposition of two squeezed coherent states that share a common displacement but have fully independent squeezing parameters and superposition weights. Throughout, we adopt the “squeeze-then-displace” convention $|\alpha,\xi\rangle \equiv D(\alpha)S(\xi)\ket{0}$. The displaced Janus state is
\begin{align}
    |\Psi\rangle = D(\alpha) \Big( \chi\, S(\xi) |0\rangle + \eta\,S(\zeta) |0\rangle \Big),
    \label{eq:disp_Janus_State}
\end{align}
where $\chi, \eta, \xi$ and $\zeta$ are four general complex parameters. We provide closed-form expressions for cross moments and the Wigner function, and exact higher-order coherences, thereby generalizing earlier zero-displacement or special-symmetry cases. A striking contrast emerges: whereas a single squeezed vacuum exhibits super-Poissonian correlations (e.g., $g^{(2)}(0)=3+1/\bar n\!\to\!3$, $g^{(3)}(0)\!\to\!15$, $g^{(4)}(0)\!\to\!105$ as $\bar n\!\to\!\infty$, while all $g^{(k)}(0)$ diverge as $\bar n\!\to\!0$), the Janus superposition can approach the two-photon Fock limit with $g^{(2)}(0)=\tfrac12$ and $g^{(k)}(0)\to 0$ for $k\ge 3$. 
As we show, adding a common displacement invariably increases the second-order coherence above this floor in a controlled, analytical way, with the minimum $g^{(2)}(0)=\tfrac{1}{2}$ only being achievable at zero displacement.

Higher-order correlation functions serve as practical diagnostics of non-Gaussian structure. While negativity of a Wigner distribution is a sufficient (not necessary) indicator of non-Gaussianity, Gaussian states are entirely determined by their first and second moments; thus any measured deviation of moments of order $k\ge 3$ from the Gaussian predictions (given the observed mean and covariance) certifies non-Gaussian features. The closed-form expressions derived here provide a measurable blueprint of the photon-number distribution, directly compatible with photon-number-resolving detection.

\noindent\emph{Summary (see Table~\ref{tab:state_comparison}).} Cats and squeezed-cats generate fringes by separating the component \emph{means}. The displaced Janus state (DJ) fixes a \emph{common mean} and drives fringes through a \emph{covariance mismatch}, yielding Wigner negativity even at $|\alpha|=0$ and enabling strong, order-selective suppression of higher-order correlations. The DJ reduces to a squeezed-cat when $\xi=\zeta$ and $(\chi,\eta)=(1,\pm 1)$; otherwise it is intrinsically non-Gaussian.

\medskip
\noindent\textit{Contributions.} We develop an analytical toolkit that
\begin{itemize}
 \item introduces a family of Generalized Squeezing Polynomials (GSPs) to compute factorial moments to arbitrary order;
 \item derives closed-form off-diagonal moments $\langle \alpha,\zeta|\,a^{\dagger k} a^{k}\,|\alpha,\xi\rangle$ that isolate interference effects;
 \item constructs full Wigner-function representations with explicit interference-induced negativity regions;
 \item explains photon-statistics behavior, including exact cancellation of \(O(r^{2})\) terms in \(g^{(k)}(0)\) in an antisymmetric small-squeezing limit;
 \item and identifies regimes achieving Heisenberg-limited quantum Fisher information (QFI) for nonlinear parameters (with limits approaching \(D(\alpha)\ket{2}\)).
\end{itemize}

Taken together, these results establish the displaced Janus state as a tunable, non-Gaussian resource that exemplifies hybrid CV/DV principles, enables metrological gains (e.g., in SU(1,1) interferometry), and offers analytic handles for resource quantification~\cite{Yurke1986,Hudelist2014,Giovannetti2006,Pezzesmerzi2018RMP,GenoniParis2010,Albarelli2018Resource}. They generalize our recent findings~\cite{Azizi2025Janus,Azizi2025Janus_higher} and furnish a reusable framework for designing and benchmarking non-Gaussian states in realistic CV platforms.

The remainder of the paper is organized as follows. Sec.~\ref{sec:fac_moments} develops the operator–displacement rearrangement and Fock–basis machinery that reduce all off–diagonal matrix elements to a universal series. 
Sec.~\ref{sec:GSP} introduces the Generalized Squeezing Functions and Polynomials, derives their symmetry and recurrence relations, and tabulates $P_{p,q}(z)$ up to order ten. 
Sec.~\ref{sec:Photon_Statistics} assembles closed-form expressions for $\mathcal{N}_k$ and $g^{(k)}(0)$ for a single squeezed coherent state and for the displaced Janus superposition, including the $\alpha\!\to\!0$ Janus limit and small-squeezing expansions. 
Sec.~\ref{sec:Non-Gaussian} characterizes the displaced Janus state via its (cross-)Wigner functions and analyzes quantum Fisher information (QFI) for linear and quadratic generators, highlighting Standard Quantum Limit (SQL) vs.\ Heisenberg scaling. 
Sec.~\ref{sec:Exp_feas} details optical platforms and state preparation, loss maps and robustness, balanced-homodyne tomography, and access to higher-order correlations with time-multiplexed and photon-number-resolving detection. 
Sec.~\ref{sec:conc} concludes with outlook. 
Appendices~\ref{app:P_prop} and \ref{app:low_M} provide numerics for $F_{p,q}$ and constructive algorithms for $P_{p,q}(z)$, and give step-by-step derivations of the low-order matrix elements used in the main text.

\section{Calculation of the Factorial Moments} \label{sec:fac_moments}

\subsection{General Framework}

We begin by defining the core components of our state. The squeezing operator is defined as:
\begin{align}
    S(\xi) = \exp\left( \frac{1}{2} (\xi^* a^2 - \xi a^{\dagger 2}) \right)
\end{align}
where the complex parameter $\xi = r e^{i\theta}$ defines the squeezing strength ($r$) and angle ($\theta$).

A squeezed coherent state is created by applying both squeezing and displacement to the vacuum. There are two conventions for the order of operations: squeezing the vacuum then displacing ($D(\alpha)S(\xi)|0\rangle$), or displacing the vacuum then squeezing ($S(\xi)D(\alpha)|0\rangle$). As is well known, the squeezing–displacement operator identity~\cite{Barnett_Radmore2002,Agarwal2012book} states
\begin{align}
    S(\xi)D(\alpha) =& D(\alpha')S(\xi), 
    \quad \text{where} \nn\\
    \alpha' =& \alpha \cosh r - \alpha^* \sinh r\, e^{i\theta}.
\end{align}
This relation shows that squeezing a coherent state is equivalent to displacing a squeezed vacuum state, with the displacement amplitude modified according to $\alpha'$. For calculational convenience throughout this work, we adopt the ``squeeze-then-displace'' convention:
\begin{align}
    |\alpha, \xi \rangle = D(\alpha) S(\xi) |0\rangle.
\end{align}

Our state of interest, the displaced Janus state, is a general superposition of two such states sharing the same displacement:
\begin{align}
    |\Psi\rangle = \chi\,|\alpha, \xi \rangle + \eta\,|\alpha, \zeta \rangle,
    \label{eq:Janus_State_def}
\end{align}
where $\zeta = s e^{i\phi}$ is the squeezing parameter of the second component. To characterize the photon statistics of $|\Psi\rangle$, we calculate the $k$-th factorial moment, defined by the normally ordered expectation value:
\begin{align}
    \mathcal{N}_k \equiv \langle \Psi | a^{\dagger k} a^k | \Psi \rangle.
\end{align}
Substituting Eq.~\eqref{eq:Janus_State_def} into this definition yields three distinct contributions:
\begin{align}
    \mathcal{N}_k
    &= |\chi|^2\,\langle  \alpha, \xi | a^{\dagger k} a^k \ket{\alpha, \xi}
    + |\eta|^2\,\langle \alpha , \zeta| a^{\dagger k} a^k 
    | \alpha, \zeta \rangle \nn\\
    &\quad+ 2\,\mathrm{Re}\!\left[ \eta^*\,\chi \, \langle \alpha , \zeta| a^{\dagger k} a^k \ket{\alpha, \xi}\,\right].
    \label{eq:Nk_expanded_full}
\end{align}
The first two terms are the diagonal moments of the individual component states, while the third term encodes the quantum interference between them via the off-diagonal matrix element. Evaluating this general off-diagonal term is the central challenge addressed in the following sections.

\subsection{Transforming the Matrix Element}
A powerful way to evaluate such off-diagonal elements is to use the transformation properties of the displacement operator, shifting the displacement action from the states onto the operators. Since both components of $|\Psi\rangle$ share the same displacement $\alpha$, the off-diagonal term can be written compactly as
\begin{align}
    \mathcal{M}_k \equiv \langle \alpha , \zeta| a^{\dagger k} a^k \ket{\alpha, \xi},
\end{align}
whose evaluation is central to our analysis.

Using the definition of the states, we can write this as:
\begin{align}
    \mathcal{M}_k = \langle 0 | S^\dagger(\zeta) D^\dagger(\alpha) a^{\dagger k} a^k D(\alpha) S(\xi) | 0 \rangle.
\end{align}
We can simplify the central operator block, $D^\dagger(\alpha) a^{\dagger k} a^k D(\alpha)$, by repeatedly using the well-known identities $D^\dagger(\alpha) a D(\alpha) = a + \alpha$ and $D^\dagger(\alpha) a^\dagger D(\alpha) = a^\dagger + \alpha^*$. This effectively removes the displacement operators at the cost of modifying the creation and annihilation operators:
\begin{align}
    D^\dagger(\alpha) a^{\dagger k} a^k D(\alpha) =& \big(D^\dagger(\alpha) a^\dagger D(\alpha)\big)^k \big(D^\dagger(\alpha) a D(\alpha)\big)^k \nn\\
    =& (a^\dagger + \alpha^*)^k (a + \alpha)^k.
\end{align}
The matrix element is thereby transformed into an expectation value with respect to the non-displaced squeezed vacuum states $|\xi\rangle$ and $|\zeta\rangle$:
\begin{align}
    \mathcal{M}_k = \langle \zeta | (a^\dagger + \alpha^*)^k (a + \alpha)^k | \xi \rangle.
    \label{eq:transformed_matrix_element}
\end{align}
This transformation is the key to the entire calculation, as the problem is now reduced to evaluating the action of polynomial operators on squeezed vacuum states.

\subsection{Fock Basis Calculation}

To evaluate Eq.~\eqref{eq:transformed_matrix_element}, we expand the operators and states in the Fock basis. First, we expand the operator terms using the binomial theorem:
\begin{align}
    (a + \alpha)^k =& \sum_{p=0}^k \binom{k}{p} \alpha^{k-p} a^p \quad \text{and} \nn\\
    (a^\dagger + \alpha^*)^k =& \sum_{q=0}^k \binom{k}{q} {\alpha^*}^{k-q} a^{\dagger q}.
\end{align}
The action of the annihilation operator $a^p$ on a squeezed vacuum state $|\xi\rangle$ is given by:
\begin{align}
    a^p |\xi\rangle =& (\cosh r)^{-1/2} \\
    &\times
    \sum_{m=\lceil p/2 \rceil}^\infty (\tanh r\, e^{i\theta})^m \frac{(2m-1)!!}{\sqrt{(2m-p)!}} |2m - p\rangle. \nn
\end{align}
Combining these expansions, the matrix element becomes a detailed sum over the Fock basis states:
\begin{widetext}
\begin{align}
    \mathcal{M}_k &= \sum_{p=0}^k \sum_{q=0}^k \binom{k}{p} \binom{k}{q} \alpha^{k-p} {\alpha^*}^{k-q} \langle \zeta | a^{\dagger q} a^p | \xi \rangle \notag \\
    &= (\cosh r \cosh s)^{-1/2} \sum_{p=0}^k \sum_{q=0}^k \binom{k}{p} \binom{k}{q} \alpha^{k-p} {\alpha^*}^{k-q} \notag \\
    &\quad \times \sum_{m,n=0}^\infty (\tanh r e^{i\theta})^m (\tanh s e^{-i\phi})^n \frac{(2m-1)!!}{\sqrt{(2m-p)!}} \frac{(2n-1)!!}{\sqrt{(2n-q)!}} \langle 2n-q | 2m-p \rangle.
\end{align}
The inner product of the Fock states, $\langle 2n - q | 2m - p \rangle$, acts as a Kronecker delta, $\delta_{2n-q, 2m-p}$. This enforces the condition $2n - q = 2m - p$, which allows us to eliminate one of the summation variables, say $n$:
\begin{align}
    n = m + \frac{1}{2}(q - p).
\end{align}
For $n$ to be an integer, the difference $q-p$ must be an even number, a condition that is naturally satisfied as the squeezed states are superpositions of only even-numbered Fock states. After substituting for $n$ and performing some algebraic rearrangement of the factorial terms, the full matrix element becomes:
\begin{align}
\mathcal{M}_k
&= \frac{1}{\sqrt{\cosh r\,\cosh s}}
\sum_{\substack{0 \le p,q \le k \\ p \equiv q \ (\mathrm{mod}\ 2)}}
\binom{k}{p} \binom{k}{q} \,
\alpha^{\,k-p} \, (\alpha^*)^{\,k-q} \,
\big(\tanh s\, e^{-i\phi}\big)^{\frac{q-p}{2}}
\nonumber \\
&\quad\times
\sum_{m = \lceil \frac{p}2 \rceil}^{\infty}
\frac{(2m)!}{(2m-p)!} \,
\frac{(2m+q-p-1)!!}{(2m)!!} \,
\Big[ \tanh r \, e^{i\theta} \, \tanh s \, e^{-i\phi} \Big]^{m}.
\label{eq:Mk_parity}
\end{align}
The matrix element becomes a detailed sum over the Fock basis states. This recurring structure motivates the definition of a new family of special functions.

The structure of the inner sum over the index $m$ is universal and depends only on the indices $p$ and $q$ and the composite parameter 
\begin{align}
    z = \tanh r \tanh s\, e^{i(\theta-\phi)}. \label{z}
\end{align}
This recurring structure motivates the definition of a new family of special functions, which we introduce now to simplify the result.

\section{Generalized Squeezing Polynomials} \label{sec:GSP}

We define the Generalized Squeezing Function, $F_{p,q}(z)$, as the recurring infinite series that appears when evaluating the matrix elements in the Fock basis:
\begin{align}
F_{p,q}(z)
\equiv &
\sum_{n=\big\lceil\tfrac{p}{2}\big\rceil}^{\infty}
\frac{(2n)!}{(2n-p)!} \,
\frac{(2n+q-p-1)!!}{(2n)!!} \,
z^{\,n}, 
\qquad p \equiv q \ (\mathrm{mod}\ 2).
\label{eq:Fpq_def}
\end{align}

The indices $p$ and $q$ must have the same parity for the function to be non-zero, a natural consequence of the Fock space calculation.

With this definition, the full off-diagonal matrix element $\mathcal{M}_k = \langle \zeta | (a^\dagger + \alpha^*)^k (a + \alpha)^k | \xi \rangle$ can be expressed in a much more compact and organized form. By substituting the definition of $F_{p,q}(z)$ back into the full expansion, we arrive at the final result:
\begin{align}
\mathcal{M}_k
=& \langle  \alpha, \zeta | a^{\dagger k} a^k \ket{\alpha, \xi} \nn\\
=& \frac{1}{\sqrt{\cosh r\,\cosh s}}
\sum_{\substack{0 \le p,q \le k \\[1pt] p \equiv q \ (\mathrm{mod}\ 2)}}
\binom{k}{p}\binom{k}{q}\,
(\alpha^*)^{\,k-q}\,\alpha^{\,k-p}\,
\big(\tanh s\,e^{-i\phi}\big)^{\frac{q-p}{2}}\,
F_{p,q}(z).
\label{eq:Mk_final_compact}
\end{align}
This expression is the central tool for calculating the interference terms in the photon statistics of the superposition state. The complexity of the Fock-space expansion is now neatly encapsulated within the special function $F_{p,q}(z)$, which is the subject of the next section.

Representative behaviors of \(F_{p,q}(z)\) for real \(z\) and their algebraic singularity at \(z\!\to\!1^{-}\) are shown in Fig.~\ref{fig:Fpq_plots}.

\begin{figure*}[t!]
  \centering
  \includegraphics[width=.8\textwidth]{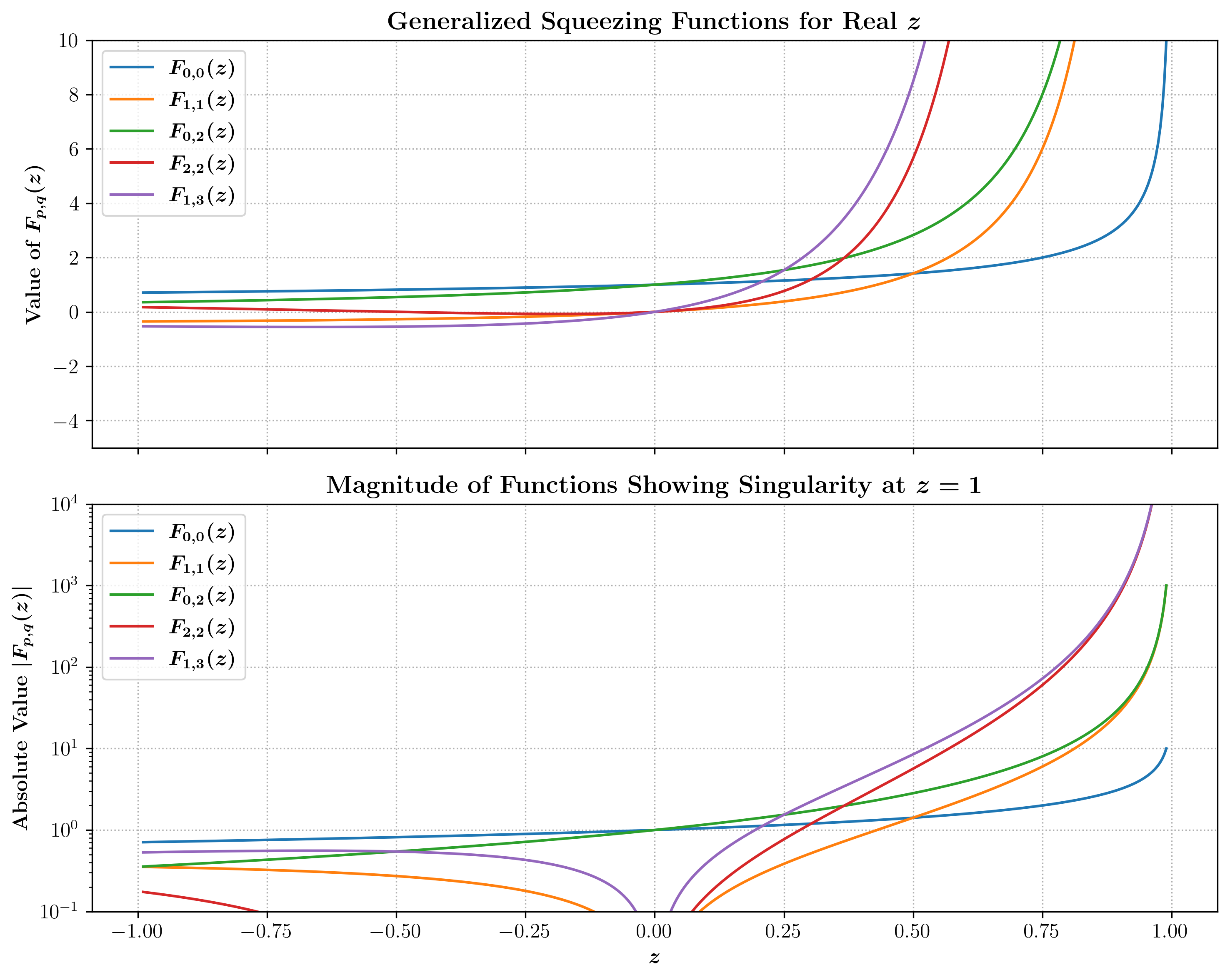}
  \caption{Generalized Squeezing Functions \(F_{p,q}(z)\) for real \(z\in[-1,1)\).
  Top: linear scale. Bottom: log magnitude highlighting the algebraic
  singularity at \(z\to 1^{-}\) predicted by \(F_{p,q}(z)=P_{p,q}(z)/(1-z)^{(p+q+1)/2}\).
  Curves shown: \(F_{0,0}, F_{1,1}, F_{0,2}, F_{2,2}, F_{1,3}\).}
  \label{fig:Fpq_plots}
\end{figure*}

\subsection{Generalized Squeezing Polynomials and Recurrence Relations}

The singular behavior of the function $F_{p,q}(z)$ as $z \to 1$ can be factored out to define the Generalized Squeezing Polynomial, $P_{p,q}(z)$:
\begin{align}
F_{p,q}(z) \equiv \frac{P_{p,q}(z)}{(1-z)^{(p+q+1)/2}}.
\label{eq:Ppq_def}
\end{align}
These generalized polynomials satisfy a set of recurrence relations that allow for their systematic construction. Two fundamental relations can be derived directly from the definition of $F_{p,q}(z)$:
\begin{align}
P_{p+1,q+1}(z) &= \Big( (2p+q+1)z - p \Big) P_{p,q}(z) + 2z(1-z) P'_{p,q}(z), \label{eq:Poly_Rec1} \\
P_{p,q+2}(z) &= \Big( 2pz - p+q+1 \Big)P_{p,q}(z) + 2z(1-z)P'_{p,q}(z). \label{eq:Poly_Rec2}
\end{align}
While these relations are fundamental, they are not sufficient on their own to generate polynomials where $p > q$. By algebraically combining them, a third, three-term recurrence relation can be found that allows for stepping the $p$ index:
\begin{align}
P_{p+2,q}(z) = P_{p+1,q+1}(z) + q(z-1)P_{p+1,q-1}(z). \label{eq:Poly_Rec3}
\end{align}
Together, this complete set of relations allows for the generation of any polynomial starting from the initial condition $P_{0,0}(z)=1$. For example, after generating the $p=0$ and $p=1$ rows, Eq.~\eqref{eq:Poly_Rec3} can be used to construct all subsequent rows of the polynomial table.

\subsection{Explicit Forms of Generalized Squeezing Polynomials}
The explicit forms of the Generalized Squeezing Polynomials, $P_{p,q}(z)$, can be generated systematically from the initial condition $P_{0,0}(z)=1$ using the recurrence relations previously defined in Eqs.~\eqref{eq:Poly_Rec1}-\eqref{eq:Poly_Rec3}.

A key feature of this framework is that it provides a natural extension of the standard, single-mode squeezing polynomials. The polynomials that lie on the main diagonal of the table, where $p=q=k$, are identical to the standard squeezing polynomials, $P_k(z)$. The off-diagonal terms ($p \neq q$) then describe the more complex correlations that arise from the two-mode interference central to the Janus state, as explored in Ref.~\cite{Azizi2025Janus_higher}.

Table~\ref{tab:Ppq_even_even} and Table~\ref{tab:Ppq_odd_odd} present these polynomials for cases up to $p,q \leq 10$, organizing them by their respective indices.

\begin{table*}[t]
\centering
\caption{Generalized Squeezing Polynomials $P_{p,q}(z)$ for even-even indices ($0\le p,q\le 10$). The omitted diagonal element is $P_{10,10}(z) = 3628800z^{10}+81648000z^{9}+285768000z^{8}+238140000z^{7}+44651250z^{6}+893025z^{5}$. The other elements of the $q=10$ column can be found using the symmetry relation $P_{p,10}(z)=z^{(p-10)/2}P_{10,p}(z)$.}
\label{tab:Ppq_even_even}
\renewcommand{\arraystretch}{1.35}
\resizebox{\columnwidth}{!}{%
\begin{tabular}{|c||c|c|c|c|c|}
\hline
$p \backslash q$ & \textbf{0} & \textbf{2} & \textbf{4} & \textbf{6} & \textbf{8} \\
\hline\hline
\textbf{0}  & $1$
            & $1$
            & $3$
            & $15$
            & $105$ \\
\hline
\textbf{2}  & $z$
            & $2z^{2}+z$
            & $12z^{2}+3z$
            & $90z^{2}+15z$
            & $840z^{2}+105z$ \\
\hline
\textbf{4}  & $3z^{2}$
            & $12z^{3}+3z^{2}$
            & $24z^{4}+72z^{3}+9z^{2}$
            & $360z^{4}+540z^{3}+45z^{2}$
            & $5040z^{4}+5040z^{3}+315z^{2}$ \\
\hline
\textbf{6}  & $15z^{3}$
            & $90z^{4}+15z^{3}$
            & $360z^{5}+540z^{4}+45z^{3}$
            & $720z^{6}+5400z^{5}+4050z^{4}+225z^{3}$
            & $20160z^{6}+75600z^{5}+37800z^{4}+1575z^{3}$ \\
\hline
\textbf{8}  & $105z^{4}$
            & $840z^{5}+105z^{4}$
            & $5040z^{6}+5040z^{5}+315z^{4}$
            & $20160z^{7}+75600z^{6}+37800z^{5}+1575z^{4}$
            & $40320z^{8}+564480z^{7}+1058400z^{6}+352800z^{5}+11025z^{4}$ \\
\hline
\textbf{10} & $945z^{5}$
            & $9450z^{6}+945z^{5}$
            & $75600z^{7}+56700z^{6}+2835z^{5}$
            & $453600z^{8}+1134000z^{7}+425250z^{6}+14175z^{5}$
            & $1814400z^{9}+12700800z^{8}+15876000z^{7}+3969000z^{6}+99225z^{5}$ \\
\hline
\end{tabular}%
}
\end{table*}

\begin{table*}[t]
\centering
\caption{Generalized Squeezing Polynomials $P_{p,q}(z)$ for odd-odd indices ($1\le p,q\le 9$). The omitted diagonal element is $P_{9,9}(z) = 362880z^{9}+6531840z^{8}+17146080z^{7}+9525600z^{6}+893025z^{5}$. The other elements of the $q=9$ column can be found using the symmetry relation $P_{p,9}(z)=z^{(p-9)/2}P_{9,p}(z)$.}
\label{tab:Ppq_odd_odd}
\renewcommand{\arraystretch}{1.35}
\begin{tabular}{|c||c|c|c|c|}
\hline
$p \backslash q$ & \textbf{1} & \textbf{3} & \textbf{5} & \textbf{7} \\
\hline\hline
\textbf{1} & $z$
           & $3z$
           & $15z$
           & $105z$ \\
\hline
\textbf{3} & $3z^{2}$
           & $6z^{3}+9z^{2}$
           & $60z^{3}+45z^{2}$
           & $630z^{3}+315z^{2}$ \\
\hline
\textbf{5} & $15z^{3}$
           & $60z^{4}+45z^{3}$
           & $120z^{5}+600z^{4}+225z^{3}$
           & $2520z^{5}+6300z^{4}+1575z^{3}$ \\
\hline
\textbf{7} & $105z^{4}$
           & $630z^{5}+315z^{4}$
           & $2520z^{6}+6300z^{5}+1575z^{4}$
           & $5040z^{7}+52920z^{6}+66150z^{5}+11025z^{4}$ \\
\hline
\textbf{9} & $945z^{5}$
           & $7560z^{6}+2835z^{5}$
           & $45360z^{7}+75600z^{6}+14175z^{5}$
           & $181440z^{8}+952560z^{7}+793800z^{6}+99225z^{5}$ \\
\hline
\end{tabular}
\end{table*}

\subsection{Symmetry Relation}

\begin{theorem}[Symmetry of Generalized Squeezing Polynomials]
The Generalized Squeezing Polynomials obey the following symmetry relation for all non-negative integers $p$ and $q$ of the same parity:
\begin{align}
    P_{p,q}(z) = z^{(p-q)/2} P_{q,p}(z).
    \label{eq:Poly_Symm}
\end{align}
\end{theorem}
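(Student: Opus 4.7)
The plan is to reduce the polynomial claim to an equivalent statement about the defining series $F_{p,q}(z)$ and verify it by a direct index shift. Because the factor $(1-z)^{(p+q+1)/2}$ in $P_{p,q}(z)=(1-z)^{(p+q+1)/2}F_{p,q}(z)$ is manifestly symmetric under $p\leftrightarrow q$, the claim $P_{p,q}(z)=z^{(p-q)/2}P_{q,p}(z)$ is equivalent to the formal (Laurent) series identity
\begin{equation*}
F_{p,q}(z)\;=\;z^{(p-q)/2}\,F_{q,p}(z),
\end{equation*}
and it is this form that I would actually establish.

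Assume first $p\ge q$. Starting from the definition of $F_{q,p}(z)$, I would multiply through by $z^{(p-q)/2}$ and introduce the new summation variable $m=n+(p-q)/2$. The parity hypothesis $p\equiv q\pmod 2$ makes this shift integer-valued, so the sum still ranges over integers and its lower bound moves from $n=q/2$ to $m=p/2$, matching the lower bound of $F_{p,q}(z)$. Using $2n=2m-(p-q)$, the factorials transform as $(2n)!=(2m-p+q)!$, $(2n-q)!=(2m-p)!$, $(2n)!!=(2m-p+q)!!$, and $(2n+p-q-1)!!=(2m-1)!!$. Termwise comparison with
\begin{equation*}
F_{p,q}(z)\;=\;\sum_{m\ge p/2}\frac{(2m)!}{(2m-p)!}\,\frac{(2m+q-p-1)!!}{(2m)!!}\,z^{m}
\end{equation*}
then reduces, after cancelling the common factor $(2m-p)!$, to
\begin{equation*}
\frac{(2m-p+q)!}{(2m-p+q)!!}\,(2m-1)!!\;=\;\frac{(2m)!}{(2m)!!}\,(2m+q-p-1)!!,
\end{equation*}
and both sides collapse to $(2m-1)!!\,(2m+q-p-1)!!$ via the elementary identity $(2k)!/(2k)!!=(2k-1)!!$ applied with $2k=2m$ on the right and $2k=2m-p+q$ on the left (the latter is even precisely by the same parity hypothesis, and $2m-p+q-1=2m+q-p-1$).

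The case $p<q$ follows by applying the already-proved identity with $p$ and $q$ interchanged and rearranging; this Laurent identity then upgrades to a genuine polynomial identity because the established symmetry forces $P_{q,p}(z)$ to carry a factor of $z^{(q-p)/2}$, exactly absorbing the negative exponent. The main obstacle should be purely clerical---keeping track of the index shift, the lower summation bound, and the factorial/double-factorial conversions without a slip. No deeper structural ingredient is required: the entire non-triviality lies in using $p\equiv q\pmod 2$ at the two places above, once to make the shift integer and once to apply $(2k)!/(2k)!!=(2k-1)!!$ to an even argument.
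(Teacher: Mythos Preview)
Your proof is correct and follows essentially the same route as the paper: reduce the polynomial symmetry to the equivalent identity $F_{p,q}(z)=z^{(p-q)/2}F_{q,p}(z)$ for the defining series, execute the integer index shift $m=n+(p-q)/2$ (enabled by the parity hypothesis), and finish by collapsing both sides with $(2k)!/(2k)!!=(2k-1)!!$. If anything, you are slightly more careful than the paper in tracking the lower summation bound, separating the $p\ge q$ and $p<q$ cases, and noting why the Laurent-series identity upgrades to an honest polynomial identity.
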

\begin{proof}
The theorem for the polynomials, $P_{p,q}(z) = z^{(p-q)/2} P_{q,p}(z)$, is a direct consequence of an equivalent symmetry for their generating functions, $F_{p,q}(z)$. We will prove the relation:
\begin{equation*}
    F_{p,q}(z) = z^{(p-q)/2} F_{q,p}(z).
\end{equation*}
We start with the right-hand side and show that it transforms into the left-hand side. By definition,
\begin{equation*}
    z^{(p-q)/2} F_{q,p}(z) = z^{(p-q)/2} \sum_{m=0}^{\infty}  \frac{(2m)!}{(2m-q)!} \frac{(2m+p-q-1)!!}{(2m)!!} z^m,
\end{equation*}
where the summation is over the appropriate range of integers $m$. Combining the powers of $z$ yields:
\begin{equation*}
    \sum_{m=0}^{\infty} \frac{(2m)!}{(2m-q)!} \frac{(2m+p-q-1)!!}{(2m)!!} z^{m + (p-q)/2}.
\end{equation*}
We now perform a change of summation variable. Since $p$ and $q$ have the same parity, their difference $p-q$ is an even integer, which means $(p-q)/2$ is an integer. Let the new summation index be $n = m + (p-q)/2$, which implies $m = n - (p-q)/2$. This transforms the power of $z$ to $z^n$.

The core of the proof is showing that the transformed coefficient is identical to the original coefficient for $F_{p,q}(z)$. We must prove the equality:
\begin{equation*}
    \frac{(2n-p+q)!}{(2n-p)!} \frac{(2n-1)!!}{(2n-p+q)!!} = \frac{(2n)!}{(2n-p)!} \frac{(2n+q-p-1)!!}{(2n)!!}.
\end{equation*}
After canceling the common denominator $(2n-p)!$ and applying the identity $(2k)!/(2k)!! = (2k-1)!!$ to both sides, the expression reduces to the trivial identity $(2n-p+q-1)!! = (2n+q-p-1)!!$.

Since the coefficients match for every $n$, the series are identical. This completes the proof.
\end{proof}

\noindent
This symmetry property significantly simplifies the construction of the polynomial table. One only needs to compute the values for the upper triangle (where $q \ge p$); the lower triangle can then be generated instantly by this rule.

\subsection{Explicit Results for Low-Order Matrix Elements}

To demonstrate the power of the general formalism, we now obtain explicit expressions for the first few matrix elements $\mathcal{M}_k$, which serve as building blocks for state normalization and the coherence functions $g^{(n)}(0)$. Throughout we write the displacement as $\alpha = |\alpha|e^{i\varphi}$. Detailed derivations are provided in Appendix~\ref{app:low_M}.

\subsubsection*{\texorpdfstring{Zeroth Order ($k=0$): State Overlap}{}}

For $k=0$, the general formula for $\mathcal{M}_k$ reduces to a single term ($p=q=0$), yielding the state overlap $\mathcal{M}_0 = \langle  \alpha ,\zeta \ket{\alpha, \xi}$:
\begin{align}
    \mathcal{M}_0 = \frac{1}{\sqrt{\cosh r \cosh s - \sinh r \sinh s\, e^{i(\theta - \phi)}}}. \label{eq:M0}
\end{align}

\subsubsection*{\texorpdfstring{First Order ($k=1$)}{}}

For $k=1$, the sum consists of two terms ($p=q=0$ and $p=q=1$). The matrix element $\mathcal{M}_1 = \langle \alpha , \zeta| a^\dagger a \ket{\alpha, \xi}$, which contributes to the mean photon number, is found to be:
\begin{align}
    \mathcal{M}_1 = \frac{1}{\sqrt{\cosh r \cosh s}\,(1-z)^{3/2}} \Big[ |\alpha|^2 (1-z) + z \Big]. \label{eq:M1}
\end{align}

\subsubsection*{\texorpdfstring{Second Order ($k=2$)}{}}

The $k=2$ element is crucial for calculating the second-order coherence, $g^{(2)}(0)$. Carrying out the summation over the five allowed pairs of $(p,q)$ yields the final explicit result:
\begin{align}
\mathcal{M}_2 = \frac{1}{\sqrt{\cosh r \cosh s}\,(1-z)^{5/2}} \bigg[ & (1-z)^2 |\alpha|^4 + (2z^2+z) \label{eq:M2}\\
& + (1-z)|\alpha|^2 \Big( 4z + \tanh s\, e^{i(2\varphi - \phi)} + \tanh r\, e^{i(\theta - 2\varphi)} \Big) \bigg].  \nonumber
\end{align}

\subsubsection*{\texorpdfstring{Third Order ($k=3$)}{}}
Finally, the $k=3$ element, needed for $g^{(3)}(0)$, is found by summing over all eight allowed pairs of $(p,q)$. This procedure gives:
\begin{align}
\mathcal{M}_3 =& \frac{1}{\sqrt{\cosh r \cosh s}\,(1-z)^{7/2}} \times \Bigg[ 
 (1-z)^3 |\alpha|^6 
 + (6z^3+9z^2) \label{eq:M3}\\
& + (1-z)^2 |\alpha|^4 \Big( 9z + 3\tanh s\, e^{i(2\varphi - \phi)} + 3\tanh r\, e^{i(\theta - 2\varphi)} \Big) \nonumber \\
& + (1-z) |\alpha|^2 \Big( 9(2z^2+z) + 9z \big( \tanh s\, e^{i(2\varphi-\phi)} + \tanh r\, e^{i(\theta-2\varphi)} \big) \Big) \Bigg]
 \nonumber
\end{align}

\section{General-Order Photon Statistics} \label{sec:Photon_Statistics}

This section details the central results of our work: the derivation of the general-order correlation function, $g^{(k)}(0)$, for both a single squeezed coherent state and a superposition of two such states. The correlation function is defined as the ratio of the $k$-th factorial moment to the $k$-th power of the mean photon number:
\begin{align}
    g^{(k)}(0) = \frac{\langle a^{\dagger k} a^k \rangle}{\langle a^\dagger a \rangle^k} \equiv \frac{\mathcal{N}_k}{\mathcal{N}_1^k}.
\end{align}
We first analyze the fundamental case of a single state, and then use those results as building blocks to construct the solution for the displaced Janus state.
\subsection{Single Squeezed Coherent State}
The factorial moments $\mathcal{N}_k = \langle  \alpha, \xi | a^{\dagger k} a^k \ket{\alpha, \xi}$ for a single squeezed coherent state are obtained by specializing our general results for the off-diagonal matrix elements. This leads to a general expression for $\mathcal{N}_k$ which can be used to construct the coherence function $g^{(k)}(0) = \mathcal{N}_k / \mathcal{N}_1^k$. While the general formula is powerful, it is instructive to write out and visualize the explicit expressions for the first few orders.

\subsubsection*{Second-Order Coherence, \texorpdfstring{$g^{(2)}(0)$}{}}
The second-order coherence function is given by the complete analytical expression:
\begin{align}
g^{(2)}(0) = \frac{|\alpha|^4 + |\alpha|^2 \left( 4\sinh^2 r + 2\sinh r \cosh r \cos(2\varphi - \theta) \right) + 3\sinh^4 r + \sinh^2 r}{\left( |\alpha|^2 + \sinh^2 r \right)^2}. 
\label{eq:g2_single_state}
\end{align}

This general expression reproduces the expected limits: for \(\alpha=0\) it reduces to the squeezed vacuum, and for \(r=0\) it yields the coherent state. To chart the intermediate regime, we fix \(r=1\) and scan the displacement \(|\alpha|\) together with the relative phase \(2\varphi-\theta\). The \emph{parameter-space} maps in Fig.~\ref{fig:g2_rotated} show pronounced tunability: rotating the squeezing axis (changing \(\theta\)) rigidly rotates the domains of sub-Poissonian \((g^{(2)}(0)<1)\) and super-Poissonian statistics about the origin. The black contour marks the crossover \(g^{(2)}(0)=1\). For large \(|\alpha|\) all panels relax to the coherent-state benchmark \(g^{(2)}(0)\to 1\), whereas near the origin the depth and orientation of antibunching/bunching are set by the phase offset \(2\varphi-\theta\). Note that Fig.~\ref{fig:g2_rotated} (and likewise Figs.~\ref{fig:g3_rotated} and \ref{fig:g4_rotated}) shows \emph{parameter-space} maps of \(g^{(k)}(0)\) versus \(|\alpha|\) and \(2\varphi-\theta\), not phase-space quasiprobabilities (Wigner or \(Q\)); here \(\alpha\) is the common displacement amplitude (the coherent mean), not a point in \((q,p)\).

\begin{figure*}[htbp]
  \centering
  \includegraphics[width=.8\textwidth]{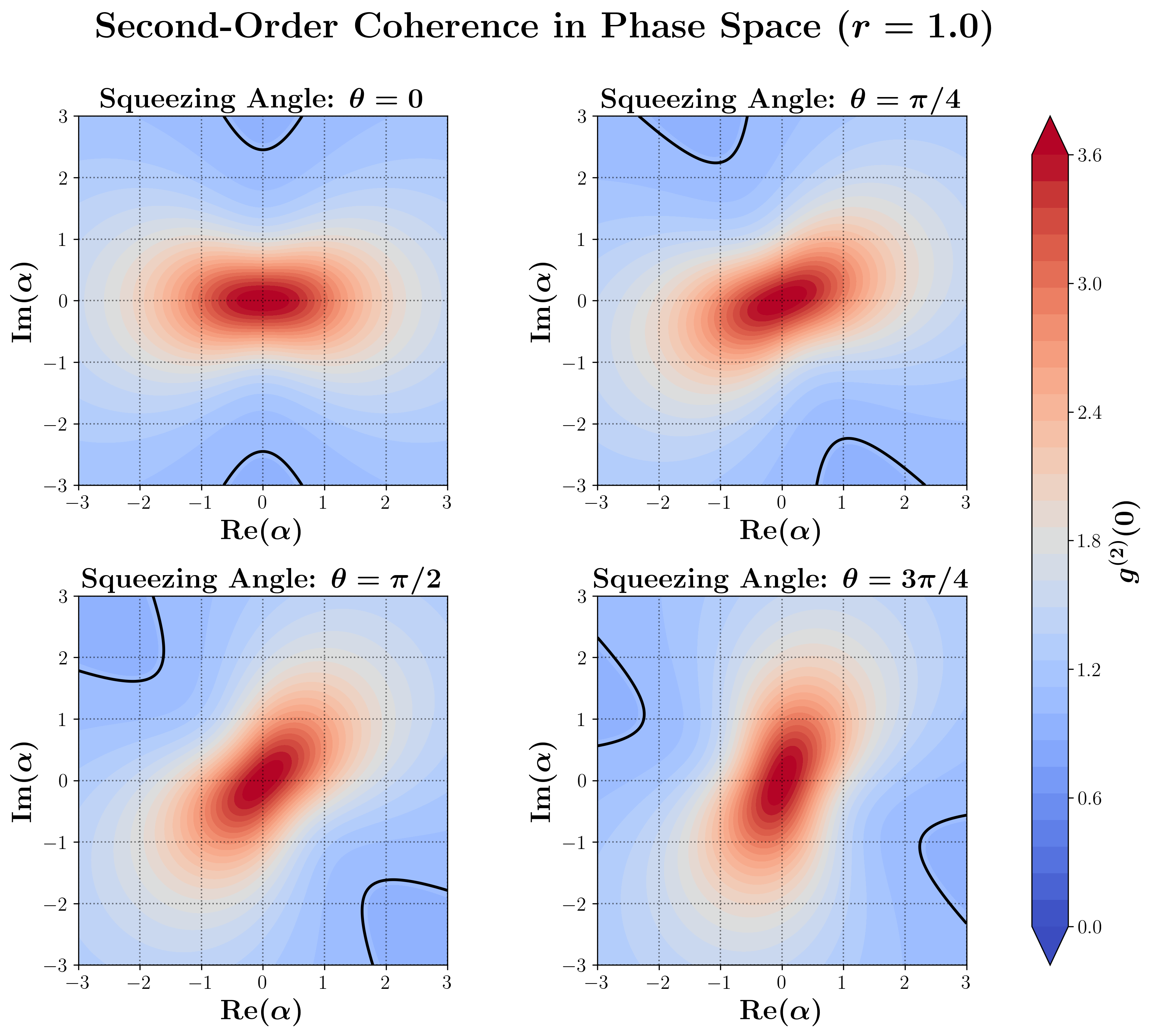}
  \caption{Phase–space maps of \(g^{(2)}(0)\) for a squeezed coherent state with \(r=1\). Each panel corresponds to a different squeezing angle \(\theta\); since the dependence is through the relative phase \(2\varphi-\theta\), varying \(\theta\) rotates the pattern. The black contour indicates \(g^{(2)}(0)=1\); blue (red) shades denote sub- (super-) Poissonian regions.}
  \label{fig:g2_rotated}
\end{figure*}

\subsubsection*{Third-Order Coherence, \texorpdfstring{$g^{(3)}(0)$}{}}

For $k=3$, the factorial moment $\mathcal{N}_3$ is found to be:

\begin{align}
\mathcal{N}_3 = &|\alpha|^6 + |\alpha|^4\Big(9\sinh^2 r + 6\sinh r \cosh r \cos(2\varphi-\theta)\Big) \nonumber\\
& + |\alpha|^2\Big(27\sinh^4 r + 9\sinh^2 r + 18\sinh^3 r \cosh r \cos(2\varphi-\theta)\Big) \nonumber \\
& + 15\sinh^6 r + 9\sinh^4 r.
\end{align}

The third-order coherence \(g^{(3)}(0)=\mathcal N_3/\mathcal N_1^{3}\) is shown on a logarithmic scale to match the \(g^{(4)}(0)\) panels (while \(g^{(2)}(0)\) remains linear). Figure~\ref{fig:g3_rotated} plots \(\log_{10} g^{(3)}(0)\) for representative \(\theta\), which avoids central saturation, highlights the rotational dependence, and makes multiplicative changes easy to compare. Since \(g^{(3)}(0)>0\), the logarithm is well defined; for numerical robustness we plot \(\log_{10}\!\big(g^{(3)}(0)+\epsilon\big)\) with \(\epsilon=10^{-8}\).

\begin{figure*}[t]
  \centering
  \includegraphics[width=.8\textwidth]{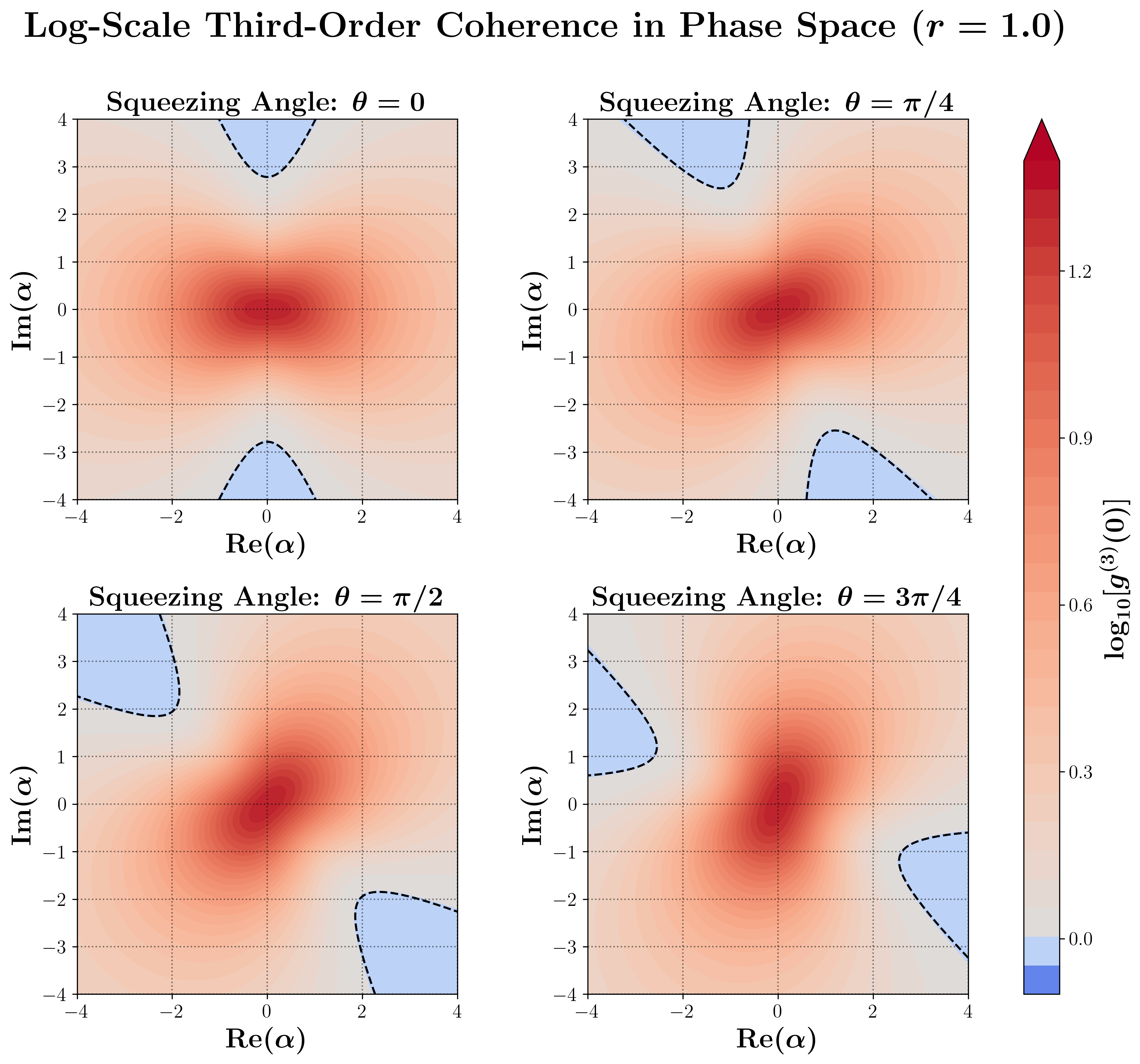}
  \caption{Phase–space maps of \(\log_{10} g^{(3)}(0)\) at \(r=1\) for four \(\theta\). The black contour marks \(g^{(3)}(0)=1\) (\(\log_{10}=0\)).}
  \label{fig:g3_rotated}
\end{figure*}

\subsubsection*{Fourth-Order Coherence, \texorpdfstring{$g^{(4)}(0)$}{}}
For $k=4$, the factorial moment $\mathcal{N}_4$ is given by:

 \begin{align}
\mathcal{N}_4 = & |\alpha|^8 + |\alpha|^6 \Big( 16\sinh^2 r + 12\sinh r \cosh r \cos(2\varphi-\theta) \Big) \nonumber \\
&+ |\alpha|^4 \Big( 108\sinh^4 r + 36\sinh^2 r + 96\sinh^3 r\cosh r \cos(2\varphi-\theta) \nonumber \\ & \hspace{1.5cm} + 6\sinh^2 r\cosh^2 r \cos(2(2\varphi-\theta)) \Big) \nonumber \\
&+ |\alpha|^2 \Big( 240\sinh^6 r + 144\sinh^4 r + (180\sinh^5 r + 36\sinh^3 r)\cosh r \cos(2\varphi-\theta) \Big) \nonumber \\
&+ 105\sinh^8 r + 90\sinh^6 r + 9\sinh^4 r.
\end{align}

The fourth-order coherence \(g^{(4)}(0)=\mathcal N_4/\mathcal N_1^{4}\) is plotted on a logarithmic scale (matching the \(g^{(3)}(0)\) panels) to avoid color saturation and emphasize the rotational dependence. Figure~\ref{fig:g4_rotated} shows \(\log_{10} g^{(4)}(0)\) with the neutral contour \(g^{(4)}(0)=1\) (\(\log_{10}=0\)) overlaid for reference.

\begin{figure}[t]
  \centering
  \includegraphics[width=.8\columnwidth]{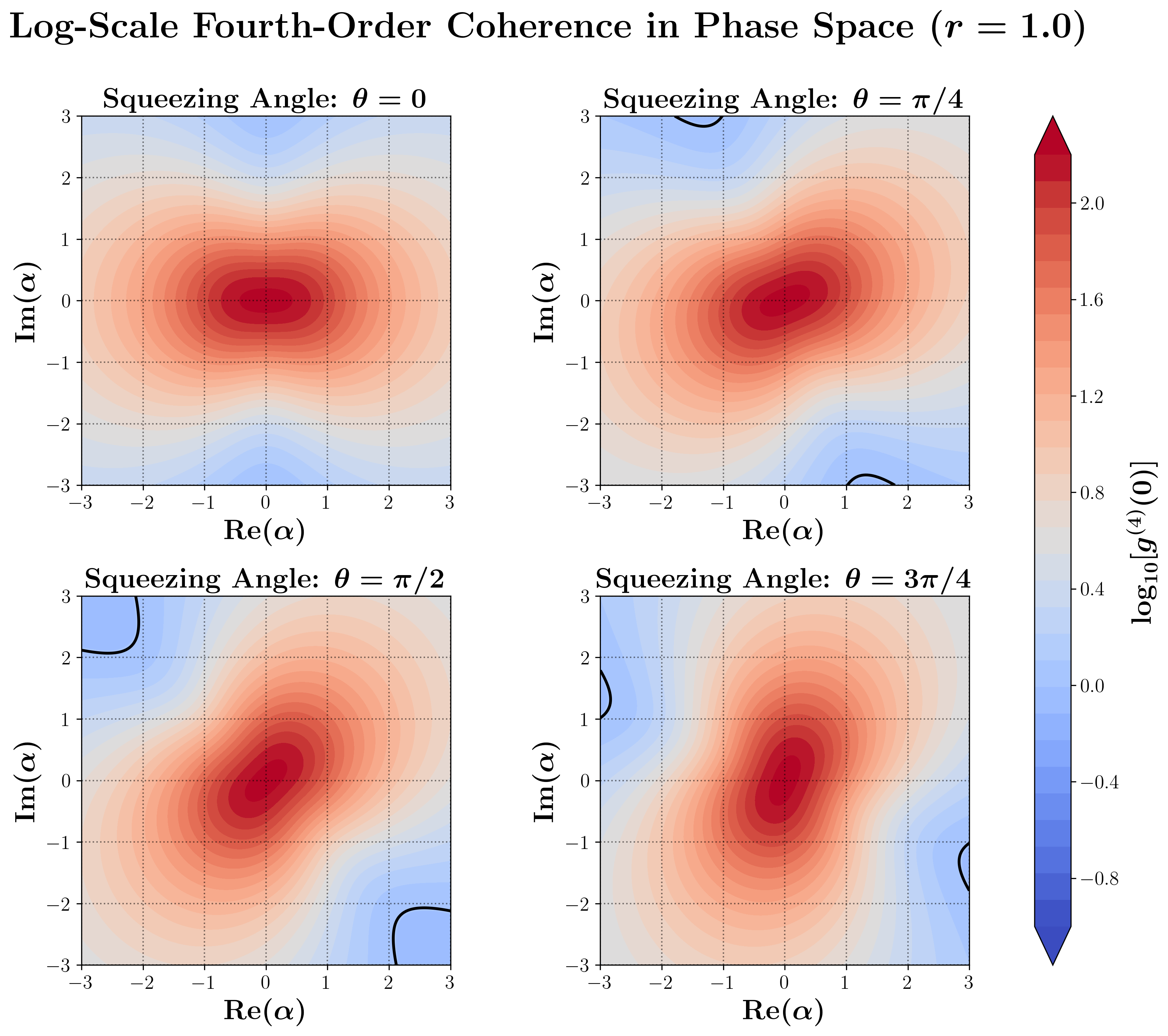}
  \caption{Phase-space maps of \(\log_{10} g^{(4)}(0)\) at \(r=1\) for four squeezing angles \(\theta\).
  The black contour marks \(g^{(4)}(0)=1\) (\(\log_{10}=0\)).
  A logarithmic scale reveals both the large central bunching (bright region) and the phase-rotated
  angular structure that would be compressed on a linear scale.}
  \label{fig:g4_rotated}
\end{figure}

\begin{figure*}[t]
  \centering
  \includegraphics[width=\textwidth]{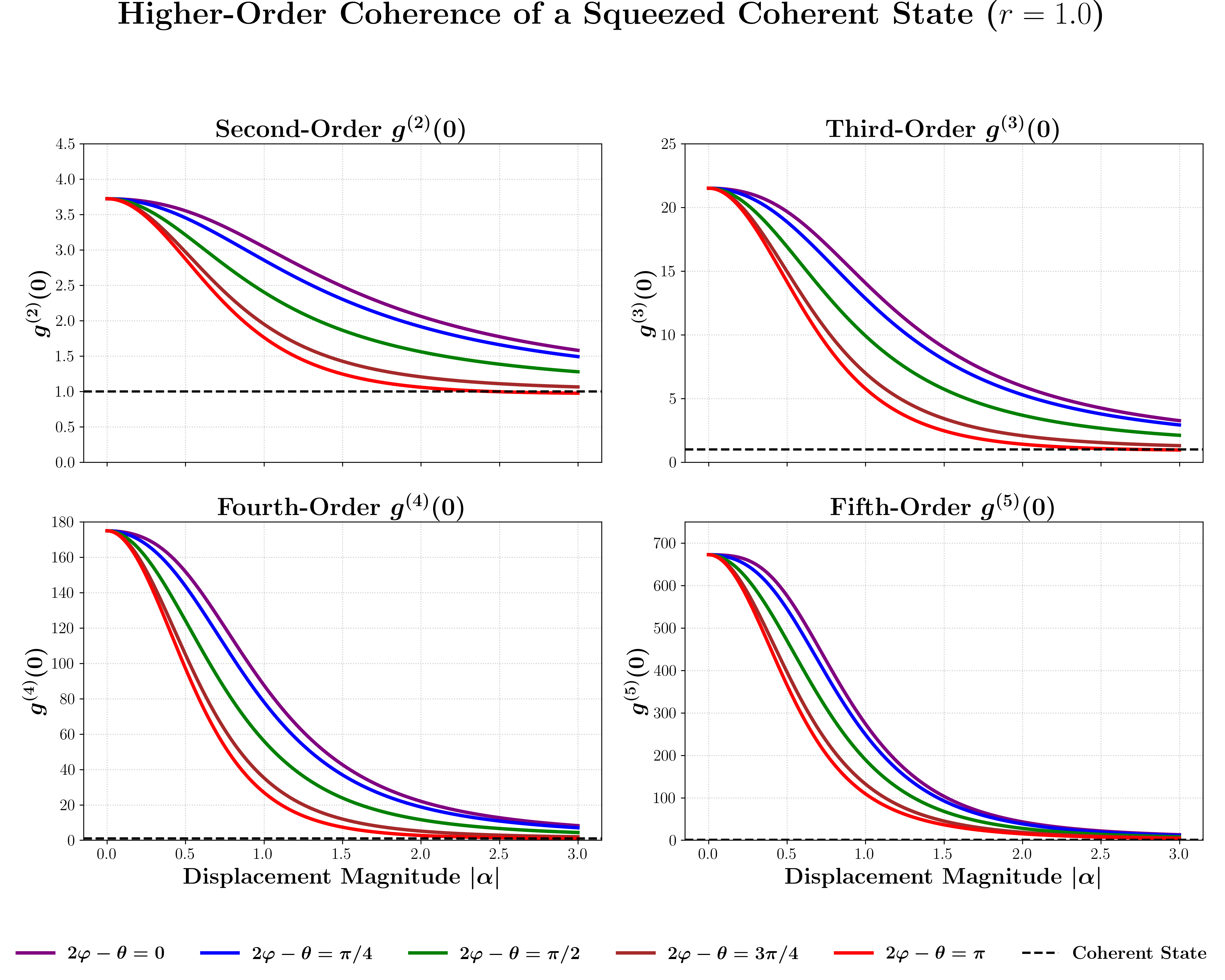}
  \caption{\textbf{Higher-order coherence of a squeezed coherent state} (\(r=1.0\)).
  Zero-delay correlations \(g^{(k)}(0)\) for \(k=2,3,4,5\) versus displacement \(|\alpha|\),
  shown for relative phases \(2\varphi-\theta\in\{0,\pi/4,\pi/2,3\pi/4,\pi\}\).
  Dashed lines mark the coherent-state benchmark \(g^{(k)}(0)=1\).
  Across all orders and phases, the curves approach unity with increasing \(|\alpha|\); for small \(|\alpha|\) the departure from unity amplifies with \(k\) and is phase-ordered, with \(2\varphi-\theta\approx\pi\) giving the strongest suppression.}
  \label{fig:gk_hierarchy_scs}
\end{figure*}

Fig.~\ref{fig:gk_hierarchy_scs} summarizes the trends of the normalized zero-delay correlations \(g^{(k)}(0)\), as obtained from Eqs.~\eqref{eq:M0}–\eqref{eq:M3} together with the definition \(g^{(k)}(0)=\mathcal{N}_k/\mathcal{N}_1^{\,k}\) (for a single squeezed coherent state, see also Eq.~\eqref{eq:g2_single_state}). The consolidated panels for \(k=2,3,4,5\) sweep the displacement magnitude \(|\alpha|\) and the relative phase \(2\varphi-\theta\). In every case the curves relax to the coherent-state benchmark \(g^{(k)}(0)=1\) as \(|\alpha|\) increases; at small \(|\alpha|\) the deviation from unity grows rapidly with \(k\) and is monotonically ordered by the phase, with the strongest suppression near \(2\varphi-\theta\approx\pi\).

\subsection{Superposition of Squeezed Coherent States}

We now use the single-state results to analyze the displaced Janus state. The $k$-th factorial moment of this superposition, $\mathcal{N}_k(\Psi)$, expands into two diagonal terms and one interference term:
\begin{align}
    \mathcal{N}_k(\Psi) = |\chi|^2 \mathcal{N}_k(\xi,\alpha) + |\eta|^2 \mathcal{N}_k(\zeta,\alpha) + 2 \mathrm{Re} \Big[ \chi \eta^* \, \mathcal{M}_k(\zeta,\xi,\alpha) \Big].
    \label{eq:Nk_superposition_expanded}
\end{align}
The diagonal moments $\mathcal{N}_k(\xi,\alpha)$ and $\mathcal{N}_k(\zeta,\alpha)$ are precisely the single-state quantities analyzed in the previous subsection. The interference is governed by the off-diagonal matrix element $\mathcal{M}_k(\zeta,\xi,\alpha) \equiv \langle \alpha , \zeta| a^{\dagger k} a^k \ket{\alpha, \xi}$, for which we have derived general analytical expressions.

Combining these components, we arrive at the general-order correlation function for the displaced Janus state:
\begin{align}
g^{(k)}(0) = \frac{|\chi|^2 \mathcal{N}_k(\xi,\alpha) + |\eta|^2 \mathcal{N}_k(\zeta,\alpha) + 2 \mathrm{Re} \Big[ \chi \eta^* \, \mathcal{M}_k(\zeta,\xi,\alpha) \Big]}{\Big(|\chi|^2 \mathcal{N}_1(\xi,\alpha) + |\eta|^2 \mathcal{N}_1(\zeta,\alpha) + 2 \mathrm{Re} \Big[ \chi \eta^* \, \mathcal{M}_1(\zeta,\xi,\alpha) \Big]\Big)^k}.
\label{eq:gk_Janus_final}
\end{align}
This final expression encapsulates the complete photon statistics of the state. It demonstrates how nonclassical properties are determined by a complex interplay between the characteristics of the individual states $\mathcal{N}_k$ and the quantum interference between them, captured by the off-diagonal term $\mathcal{M}_k$.

\subsection*{Normalization Condition}

For the state $|\Psi\rangle = \chi |\alpha, \xi \rangle + \eta |\alpha, \zeta \rangle$ to be physically valid, its norm must be one, i.e., $\langle\Psi|\Psi\rangle = 1$. Expanding this inner product yields the general constraint on the complex coefficients $\chi$ and $\eta$:
\begin{align}
    |\chi|^2 + |\eta|^2 + 2 \mathrm{Re} \Big[ \eta^*\chi \langle \alpha, \zeta\ket{\alpha, \xi} \Big] = 1.
    \label{eq:norm_expanded}
\end{align}
The crucial interference term depends on the state overlap $\langle \alpha, \zeta\ket{\alpha, \xi}$. This overlap is equivalent to the zeroth-order matrix element, $\mathcal{M}_0$, which is given by Eq.~\eqref{eq:M0}. Substituting this explicit formula for the overlap into Eq.~\eqref{eq:norm_expanded} gives the complete and final normalization condition for the displaced Janus state.
\begin{align}
    |\chi|^2 + |\eta|^2 + 2 \mathrm{Re} \Bigg[ \frac{\eta^*\chi}{\sqrt{\cosh s \cosh r - \sinh s \sinh r \, e^{i(\theta - \phi)}}} \Bigg] = 1.
\end{align}

\subsection*{Explicit Results for Low-Order Correlations}

While the general formula is compact, the explicit expressions reveal the full complexity and richness of the interference.
\subsubsection*{Second-Order Coherence, \texorpdfstring{$g^{(2)}(0)$}{}}
The second-order coherence is given by $g^{(2)}(0) = \mathcal{N}_2(\Psi) / \mathcal{N}_1(\Psi)^2$. The numerator, the second factorial moment of the superposition, is:
\begin{align}
\mathcal{N}_2(\Psi) = &|\chi|^2 \Big[ |\alpha|^4 + |\alpha|^2 \left( 4\sinh^2 r + 2\sinh r \cosh r \cos(2\varphi - \theta) \right) + 3\sinh^4 r + \sinh^2 r \Big] \nonumber \\
&+ |\eta|^2 \Big[ |\alpha|^4 + |\alpha|^2 \left( 4\sinh^2 s + 2\sinh s \cosh s \cos(2\varphi - \phi) \right) + 3\sinh^4 s + \sinh^2 s \Big] \nonumber \\
&+ 2 \mathrm{Re} \Bigg\{ \frac{\chi \eta^*}{\sqrt{\cosh r \cosh s}\,(1-z)^{5/2}} \times \bigg[ (1-z)^2 |\alpha|^4  + (2z^2+z) \nonumber \\
& 
\phantom{+ 2 \mathrm{Re} \Bigg\{ }
+ (1-z)|\alpha|^2 \Big( 4z + \tanh s\, e^{i(2\varphi - \phi)} + \tanh r\, e^{i(\theta - 2\varphi)} \Big) \bigg] \Bigg\}.
\end{align}
The denominator is constructed from the mean photon number, $\mathcal{N}_1(\Psi)$, which is given by:
\begin{align}
\mathcal{N}_1(\Psi) = &|\chi|^2 \left( |\alpha|^2 + \sinh^2 r \right) + |\eta|^2 \left( |\alpha|^2 + \sinh^2 s \right) \nonumber \\
&+ 2 \mathrm{Re} \left[ \frac{\chi \eta^*}{\sqrt{\cosh r \cosh s}\,(1-z)^{3/2}} \Big( |\alpha|^2(1-z) + z \Big) \right].
\end{align}
\subsubsection*{Third-Order Coherence, \texorpdfstring{$g^{(3)}(0)$}{}}

The third-order coherence, $g^{(3)}(0) = \mathcal{N}_3(\Psi) / \mathcal{N}_1(\Psi)^3$, is built from the same denominator. The numerator, $\mathcal{N}_3(\Psi)$, is constructed from the single-state moments $\mathcal{N}_3(\xi,\alpha)$ and $\mathcal{N}_3(\zeta,\alpha)$, and the off-diagonal matrix element $\mathcal{M}_3(\zeta,\xi,\alpha)$ from Eq.~\eqref{eq:M3}:
\begin{align}
\mathcal{N}_3(\Psi) = &|\chi|^2 \Big[ \mathcal{N}_3(\xi,\alpha) \Big] + |\eta|^2 \Big[ \mathcal{N}_3(\zeta,\alpha) \Big] + 2 \mathrm{Re} \Big[ \chi \eta^* \, \mathcal{M}_3(\zeta,\xi,\alpha) \Big].
\end{align}
Here we use the symbolic forms for the components of $\mathcal{N}_3(\Psi)$ as writing the fully expanded expression becomes exceptionally long. However, it can be constructed straightforwardly by substituting the previously derived formulas.

\subsection{The \texorpdfstring{$\alpha=0$}{alpha=0} Limit: Recovering the Janus State}

Setting the common displacement to zero in our displaced Janus formulas reproduces the original Janus state— a superposition of squeezed \emph{vacua}—with all overlaps, cross terms, and moments reducing accordingly. Figure~\ref{fig:wigner-janus} illustrates this limit: at \(\alpha=0\) the components share a common mean, and the covariance mismatch alone generates the characteristic interference fringes and negativity.

\begin{figure*}[t]
  \centering
  \includegraphics[width=.8\textwidth]{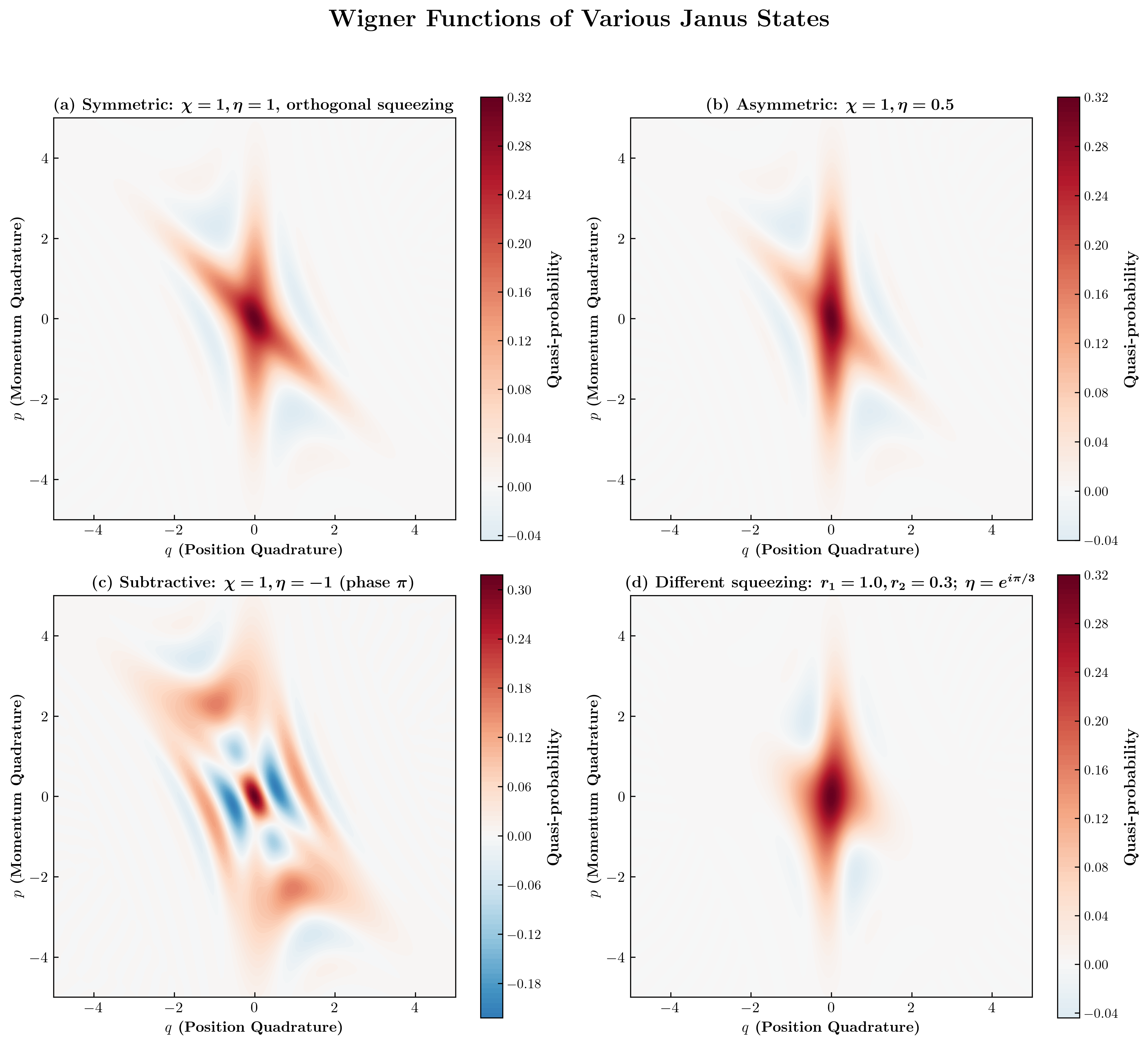}
  \caption{Janus (\(\alpha=0\)) Wigner functions for representative parameter choices, showing covariance-mismatch-driven negativity and morphology changes with relative squeezing.}
  \label{fig:wigner-janus}
\end{figure*}

First, the state vector itself simplifies as expected. A squeezed coherent state with zero displacement is, by definition, a squeezed vacuum state:
\begin{align}
    |\alpha=0, \xi\rangle = D(0)S(\xi)|0\rangle = S(\xi)|0\rangle \equiv |\xi\rangle.
\end{align}
Therefore, the superposition becomes a coherent superposition of two distinct squeezed vacua:
\begin{align}
    |\Psi\rangle\Big|_{\alpha=0} = \chi |\xi\rangle + \eta |\zeta\rangle.
\end{align}
Consequently, the photon statistics must also reduce to those of this undisplaced state. This is confirmed by setting $|\alpha|=0$ in our general formulas for the moments of the superposition state. The single-state moment, $\mathcal{N}_k(\xi,\alpha)$, correctly simplifies to that of a squeezed vacuum, $\mathcal{N}_k^{(\text{vac})}(r)$. The off-diagonal matrix element, $\mathcal{M}_k(\zeta,\xi,\alpha)$, becomes the interference term between two squeezed vacuum states:
\begin{align}
    \mathcal{M}_k(\zeta,\xi,\alpha)\Big|_{\alpha=0} = \langle \zeta | a^{\dagger k} a^k | \xi \rangle.
\end{align}
As a concrete example, setting $|\alpha|=0$ in the general expressions for $\mathcal{N}_1(\Psi)$ and $\mathcal{N}_2(\Psi)$ yields the second-order coherence function for the undisplaced Janus state:
\begin{align}
g^{(2)}(0)\Big|_{\alpha=0} = \frac{|\chi|^2 \mathcal{N}_2^{(\text{vac})}(r) + |\eta|^2 \mathcal{N}_2^{(\text{vac})}(s) + 2 \mathrm{Re} \left[ \chi \eta^* \mathcal{M}_2(\zeta,\xi,0) \right]}{\left( |\chi|^2 \sinh^2 r + |\eta|^2 \sinh^2 s + 2 \mathrm{Re} \left[ \chi \eta^* \mathcal{M}_1(\zeta,\xi,0) \right] \right)^2},
\end{align}
where $\mathcal{N}_2^{(\text{vac})}(r) = 3\sinh^4 r + \sinh^2 r$, and the explicit formulas for the interference terms $\mathcal{M}_1(\zeta,\xi,0)$ and $\mathcal{M}_2(\zeta,\xi,0)$ were derived previously. This confirms that our framework provides a true generalization, with the undisplaced case naturally embedded within the more comprehensive model.

\subsubsection*{Special Case: Antisymmetric Superposition with Vanishing Squeezing}

A particularly interesting limit occurs for an "antisymmetric" superposition where $\eta = -\chi$, with equal squeezing magnitudes ($r=s$) and opposing squeezing axes ($\theta = \phi+\pi$). In the limit of small squeezing ($r \to 0$), the state normalization condition imposes the constraint $2|\chi|^2 r^2 \approx 1$. This unique interplay produces a state with strong quantum correlations. A detailed symbolic calculation of the series expansion in the squeezing parameter $r$ reveals that the first-order correction term (of order $r^2$) vanishes for both $g^{(2)}(0)$ and $g^{(3)}(0)$. This indicates that the leading-order photon statistics of this state are remarkably robust against small amounts of squeezing.

The leading-order behavior and the first non-vanishing correction (of order $r^4$) are given by the following expressions:
\begin{align}
    g^{(2)}(0) &\approx \frac{|\alpha|^4 + 8|\alpha|^2 + 2}{\left( |\alpha|^2 + 2 \right)^2} + \frac{5(2|\alpha|^4 - |\alpha|^2 + 10)}{2(|\alpha|^2+2)^3} r^4 \\
    g^{(3)}(0) &\approx \frac{|\alpha|^6 + 18|\alpha|^4 + 18|\alpha|^2}{\left( |\alpha|^2 + 2 \right)^3} + \frac{15(2|\alpha|^6 + 9|\alpha|^4 + 34|\alpha|^2 + 20)}{2(|\alpha|^2+2)^4} r^4
\end{align}
For zero displacement ($\alpha=0$), the state approaches the pure two-photon Fock state $|2\rangle$. The formulas above correctly reproduce the characteristic values $g^{(2)}(0)=1/2$ and $g^{(3)}(0)=0$ in the $r \to 0$ limit~\cite{Azizi2025Janus, Azizi2025Janus_higher}.  The $r^4$ terms describe the first deviations from the pure two-photon character as a small amount of squeezing is introduced.

In the antisymmetric, equal-amplitude Janus configuration (same displacement on both components, \(r=s\), \(\eta=-\chi\), and \(\theta=\varphi+\pi\)), a common displacement invariably increases the second-order coherence. In the small-squeezing regime, where the leading \(O(r^{2})\) terms cancel, we obtain
\begin{align}
g^{(2)}(0)\approx \frac{|\alpha|^{4}+8|\alpha|^{2}+2}{\big(|\alpha|^{2}+2\big)^{2}}.
\end{align}
Hence,
\begin{align}
g^{(2)}(0)-\tfrac{1}{2}=\frac{|\alpha|^{2}\big(|\alpha|^{2}+12\big)}{2\big(|\alpha|^{2}+2\big)^{2}}\ge 0,
\end{align}
with equality only at \(|\alpha|=0\). Thus the displaced Janus never beats the undisplaced one: the global minimum within this family is achieved at zero displacement, where the state approaches \(\ket{2}\) and attains the floor \(g^{(2)}(0)=\tfrac{1}{2}\). Physically, adding a common displacement breaks the even-photon parity structure that underpins the antibunching of the \(\alpha=0\) superposition, thereby raising \(g^{(2)}(0)\).

\subsection*{Understanding the Small-Squeezing Expansion: Role of Interference and Optimization}

To clarify the difference in the small-squeezing expansions between the displaced and undisplaced Janus states, we examine the origin of the $r^2$ correction term in $g^{(2)}(0)$. In the antisymmetric superposition with equal amplitudes ($|\chi| = |\eta|$, $\eta = -\chi$, $\theta = \phi + \pi$), the $r^2$ term cancels due to destructive interference, regardless of displacement $\alpha$. However, in the optimized undisplaced case (where amplitudes are unequal to minimize $g^{(2)}(r)$ for finite $r$), this symmetry is broken, introducing a non-zero $r^2$ term.

\subsubsection*{\texorpdfstring{Equal-Amplitude Case: Cancellation of the $r^2$ Term}{}}
In this configuration, the normalization requires $|\chi|^2 \propto 1/r^2$ as $r \to 0$, preserving the limit to a (displaced) Fock state $|2\rangle$. The factorial moments expand as:
\begin{align*}
\mathcal{N}_1(\Psi) &\approx 2 + O(r^4) \quad (\alpha = 0), \quad \mathcal{N}_1(\Psi) \approx |\alpha|^2 + 2 + O(r^4) \quad (\alpha \neq 0), \\
\mathcal{N}_2(\Psi) &\approx 2 + O(r^4) \quad (\alpha = 0), \quad \mathcal{N}_2(\Psi) \approx |\alpha|^4 + 8|\alpha|^2 + 2 + O(r^4) \quad (\alpha \neq 0).
\end{align*}
The $r^2$ contributions from diagonal terms ($\mathcal{N}_k(\xi, \alpha) + \mathcal{N}_k(\zeta, \alpha)$) are exactly opposed by the interference ($-2 \mathrm{Re}[\mathcal{M}_k]$), leading to $g^{(2)}(0) \approx \frac{1}{2} + 0 \cdot r^2 + \frac{25}{8} r^4 + O(r^6)$ for $\alpha = 0$, or the analogous displaced form.

\begin{figure*}[b]
  \centering
  \includegraphics[width=\textwidth]{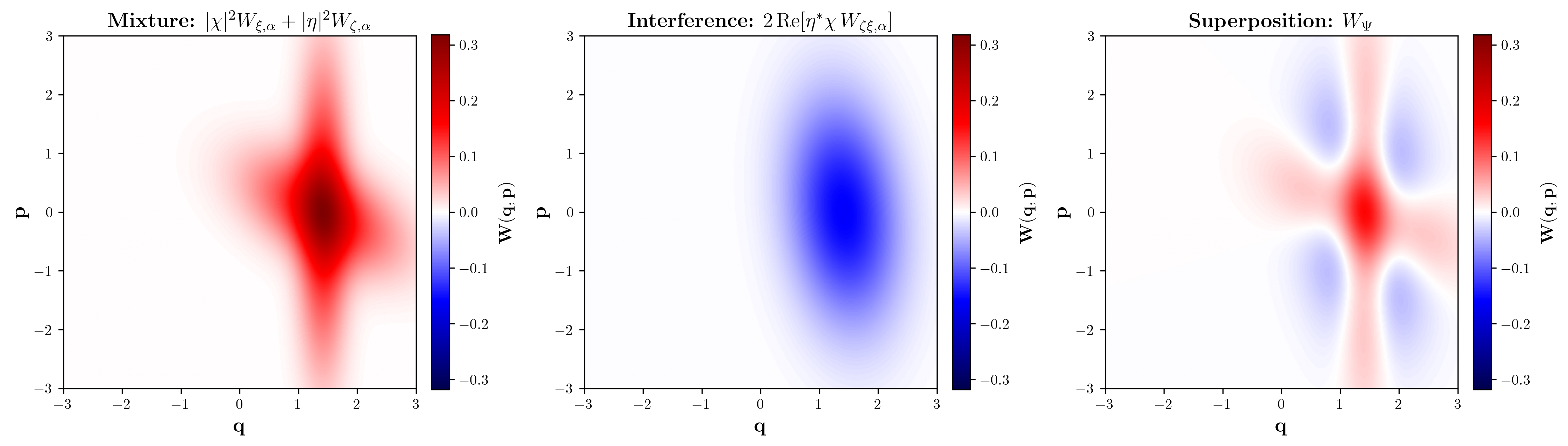}
  \caption{Wigner-function anatomy of a coherent superposition: (left) weighted mixture of the two Gaussians; (middle) interference term; (right) full superposition \(W_\Psi\).}
  \label{fig:wigner-decomp}
\end{figure*}

For the undisplaced ($\alpha = 0$) Janus state optimized to minimize $g^{(2)}(r)$, the amplitudes $|\chi| \neq |\eta|$ break the interference symmetry. This yields the exact expression:
\begin{align}
g^{(2)}(r) = \frac{12 \sinh^{10} r + 40 \sinh^{8} r + 51 \sinh^{6} r + 28 \sinh^{4} r + 11 \sinh^{2} r + 2}{4 \sinh^{10} r + 16 \sinh^{8} r + 29 \sinh^{6} r + 29 \sinh^{4} r + 16 \sinh^{2} r + 4},
\end{align}
with expansion $g^{(2)}(r) \approx \frac{1}{2} + \frac{3}{4} \sinh^2 r + \frac{3}{8} \sinh^4 r + O(\sinh^6 r)$. The optimization tunes the ratio $|\chi|/|\eta|$ for each $r$, preventing full cancellation and introducing the positive $\frac{3}{4} \sinh^2 r \approx \frac{3}{4} r^2$ term.

This highlights how quantum interference and coefficient choice control the nonclassical photon statistics in the small-squeezing regime.

\end{widetext}

\section{The Displaced Janus State as a Non-Gaussian Resource} \label{sec:Non-Gaussian}

A central challenge in quantum computing is the identification and characterization of resources that enable a quantum advantage. In the continuous-variable (CV) paradigm, the Gottesman--Knill theorem dictates that a quantum device restricted to Gaussian states, Gaussian operations, and Gaussian measurements can be efficiently simulated by a classical computer. To achieve universal quantum computation, at least one non-Gaussian element is required. States with negative Wigner functions are a particularly powerful class of non-Gaussian resources that can unlock this quantum advantage. The displaced Janus (DJ) state is a prime example, whose utility can be characterized both structurally via its Wigner function and operationally via the quantum Fisher information (QFI).

\subsection{Wigner Function}

We begin with the single squeezed coherent state $\ket{\alpha,\xi}$, where $\xi=r\,e^{i\theta}$ and $\alpha=(\langle q\rangle+i\langle p\rangle)/\sqrt{2}$. Its covariance is
\begin{align}
  V_\xi=&\frac{1}{2}\,R\!\left(\frac{\theta}{2}\right)
  \begin{pmatrix} e^{-2r} & 0 \\[2pt] 0 & e^{2r} \end{pmatrix}
  R\!\left(\frac{\theta}{2}\right)^{\!T},
  \nn\\
  R(\phi)=&\begin{pmatrix}\cos\phi & -\sin\phi\\ \sin\phi & \cos\phi\end{pmatrix}.
\end{align}
With $\mathbf{x}=(q,p)^T$ and $\mathbf{x}_\alpha=(\sqrt{2}\,\mathrm{Re}\,\alpha,\sqrt{2}\,\mathrm{Im}\,\alpha)^T$, the Wigner function is the Gaussian
\begin{align}
  W_{\xi,\alpha}(q,p)
  =\frac{1}{2\pi\sqrt{\det V_\xi}}\,
  \exp\!\left[-\frac{1}{2}\,(\mathbf{x}-\mathbf{x}_\alpha)^{T}V_\xi^{-1}(\mathbf{x}-\mathbf{x}_\alpha)\right],
\end{align}
which for a pure squeezed state ($\det V_\xi=\tfrac14$) simplifies to $W_{\xi,\alpha}=\pi^{-1}\exp[\cdots]$. Thus any single Gaussian $W_{\xi,\alpha}$ is everywhere non-negative.

\begin{figure*}
  \centering
  \includegraphics[width=.8\textwidth]{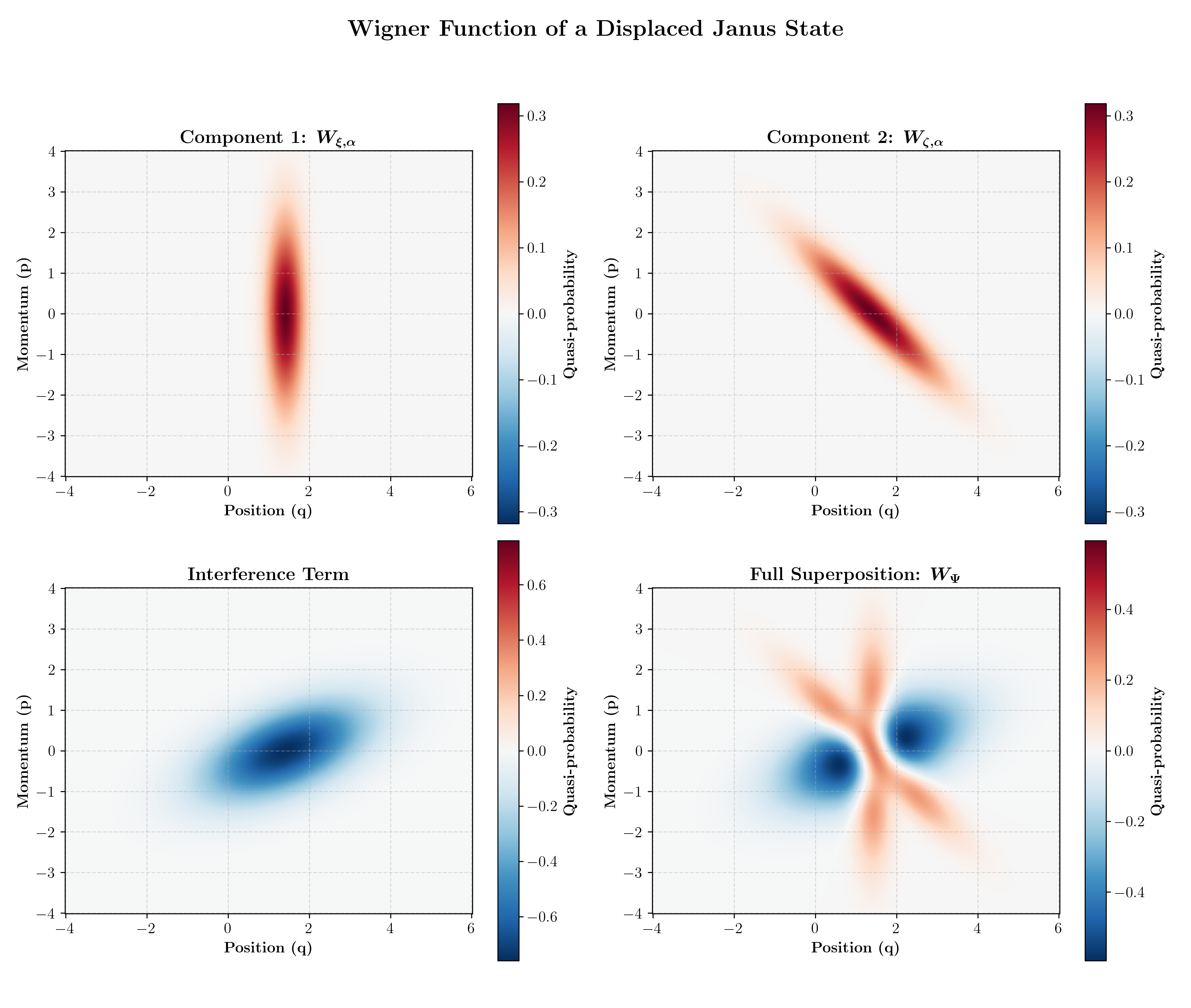}
  \caption{Displaced-Janus anatomy: component \(W_{\xi,\alpha}\), component \(W_{\zeta,\alpha}\), interference term, and full superposition \(W_\Psi\).}
  \label{fig:wigner-dj-anatomy}
\end{figure*}

\medskip\noindent\textit{Displaced Janus superposition and decomposition.}
Consider the displaced Janus (DJ) state
\begin{align}
  \ket{\Psi}=\chi\,\ket{\alpha,\xi}+\eta\,\ket{\alpha,\zeta},\qquad
  \xi=r\,e^{i\theta},\;\; \zeta=s\,e^{i\varphi},
\end{align}
whose two components share the same mean $\alpha$ but have distinct covariances $V_\xi$ and $V_\zeta$. Writing $\beta=(q+ip)/\sqrt{2}$, the Wigner function splits into a classical mixture plus a purely quantum interference term,
\begin{align}
    W_{\Psi}(\beta)
    =& |\chi|^{2}\,W_{\xi,\alpha}(\beta)
     + |\eta|^{2}\,W_{\zeta,\alpha}(\beta) \nn\\
     &
     + 2\,\mathrm{Re}\!\left[\eta^{*}\chi\,W_{\zeta\xi,\alpha}(\beta)\right].
    \label{eq:Wigner_superposition}
\end{align}
The mixture–plus–interference decomposition of the DJ Wigner function is illustrated in Fig.~\ref{fig:wigner-decomp}.

A normalization-consistent closed form for the cross term (with $dq\,dp=2\,d^2\beta$) is
\begin{align}
    W_{\zeta\xi,\alpha}(\beta)
    =& \frac{\langle \zeta|\xi\rangle}{2\pi}\,\sqrt{\det \Sigma^{-1}}\,
    \exp\!\Big[
      -A\,|\beta-\alpha|^{2} \nn\\
      &
      + B\,(\beta-\alpha)^{2}
      + B^{*}\,(\beta^{*}-\alpha^{*})^{2}
    \Big],
    \label{eq:cross_Wigner_complex}
\end{align}
where $\Sigma=\tfrac{1}{2}(V_\zeta+V_\xi)$ and
\begin{align}
    A=\tfrac{1}{2}\!\left(\Sigma^{-1}_{11}+\Sigma^{-1}_{22}\right),
    \qquad
    B=\tfrac{1}{4}\!\left(\Sigma^{-1}_{11}-\Sigma^{-1}_{22}-2i\,\Sigma^{-1}_{12}\right).
\end{align}
This expression integrates to $\langle\zeta|\xi\rangle$, reduces to the usual single-state Gaussian when $\zeta=\xi$, and shows that a nonzero $\mathrm{Im}\,B$ (mismatched squeezing axes) generates the phase-sensitive fringes. A full derivation and normalization audit are provided in App.~\ref{app:WignerDerivations}.

\noindent\textit{Phase-space portraits, control knobs, and negativity.}
Figs.~\ref{fig:wigner-dj-anatomy} (anatomy) and \ref{fig:wigner-dj-gallery} (gallery) visualize Eq.~\eqref{eq:Wigner_superposition} for the displaced Janus (DJ) state. The \emph{anatomy} figure decomposes the DJ Wigner into its two positive Gaussian components and the oscillatory interference term, highlighting that—because both components share the same mean \(\alpha\)—nonclassicality stems from a \emph{covariance mismatch} \(V_\zeta\neq V_\xi\), not mean separation. The \emph{gallery} sweeps the control knobs to show how the pattern responds: (i) the relative squeezing phase \(\Delta\theta=\theta-\varphi\) and the complex coefficient \(B\) in \eqref{eq:cross_Wigner_complex} set fringe orientation/curvature; (ii) the weights \((\chi,\eta)\) set interference contrast; and (iii) \(|\alpha|\) repositions negativity lobes away from the origin. Negativity appears wherever the interference locally exceeds the classical mixture of the two Gaussians.

\medskip\noindent\textit{Fringe orientation and spacing.}
Let $\delta\beta=\beta-\alpha=\rho\,e^{i\varphi}$ and write $B=|B|e^{i\arg B}$. The oscillatory part of the exponent is
\begin{align}
  B\,\delta\beta^{2}+B^{*}\,\delta\beta^{*2}
  = 2|B|\,\rho^{2}\cos\!\big(2\varphi+\arg B\big),
\end{align}
so the fringes are quadratic (hyperbolic in $(q,p)$), rotated by $\tfrac{1}{2}\arg B$, and become denser as $|B|$ increases (larger mismatch or $\Delta\theta\!\to\!\pm\pi/2$). Modest $|\alpha|$ can move negativity lobes away from the origin to improve tomographic SNR; the subtractive choice $\eta=-\chi$ maximizes contrast, whereas asymmetric weights ($|\eta/\chi|\ll1$) suppress it.

\begin{figure*}[t]
  \centering
  \includegraphics[width=.8\textwidth]{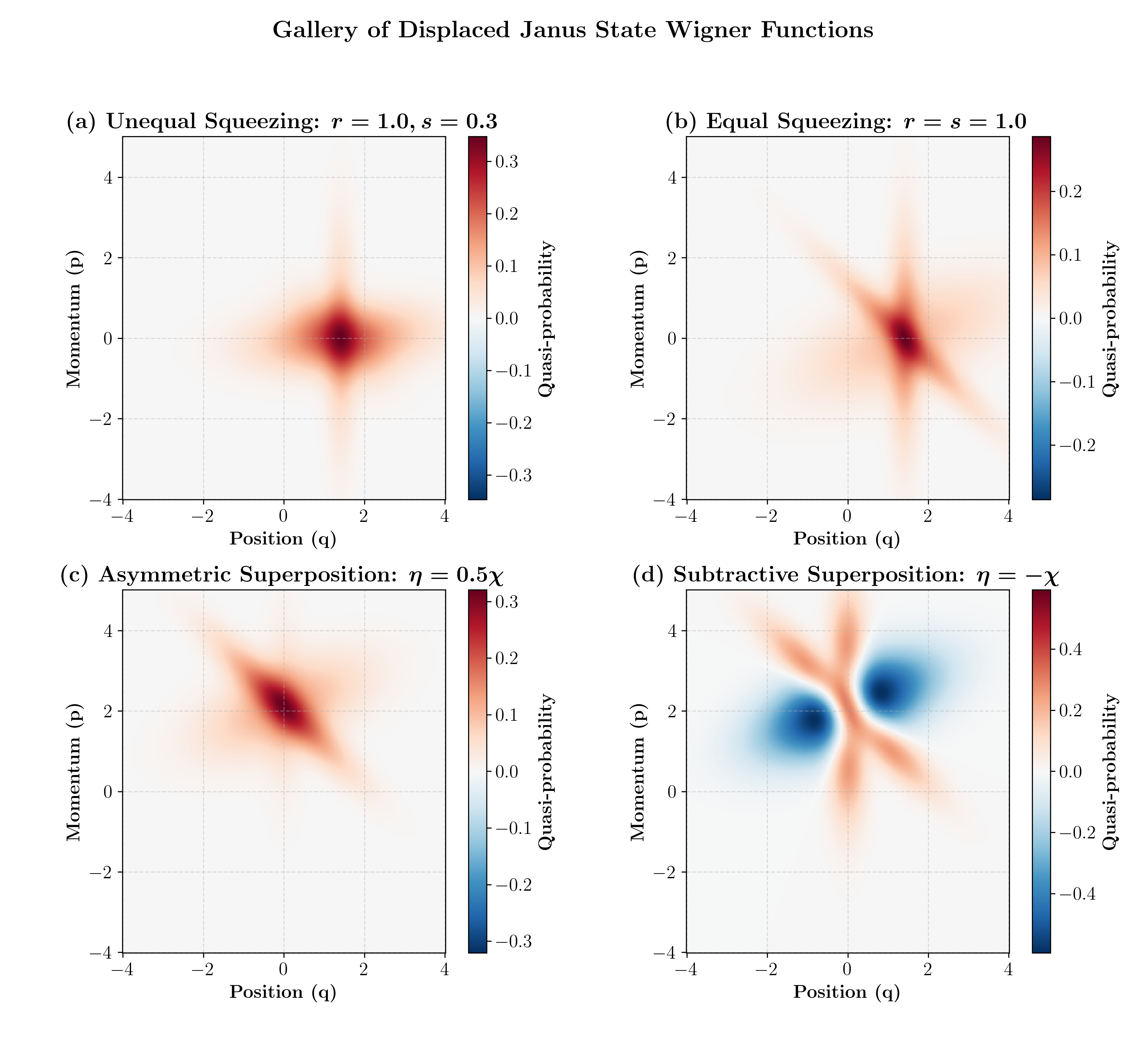}
  \caption{Gallery of displaced-Janus Wigner functions across parameters \((r,s,\theta-\phi,\chi,\eta,\alpha)\), illustrating how covariance mismatch and weights sculpt negativity and fringe geometry.}
  \label{fig:wigner-dj-gallery}
\end{figure*}

\begin{remark}[Covariance-driven negativity at $|\alpha|=0$]
When $\alpha=0$, $V_\zeta\neq V_\xi$ and $\Delta\theta\not\equiv 0$ make $B$ generically complex, producing oscillatory cross terms. By tuning $\arg(\eta^{*}\chi)$ one enforces local destructive interference and achieves $W_\Psi<0$. This has no analog in cat/squeezed-cat states where $V_\zeta=V_\xi$, explaining the DJ state’s negativity without mean separation (cf.\ Table~\ref{tab:state_comparison}).
\end{remark}

\subsection{Quantum Fisher Information and Metrological Applications}

The displaced Janus state, with its rich non-Gaussian structure, offers significant potential for quantum metrology. Suppose a parameter $\lambda$ is imprinted on the probe $|\Psi(\lambda)\rangle$. The ultimate precision obeys the quantum Cram\'er--Rao bound $\Delta\lambda \ge 1/\sqrt{\nu F_Q(\lambda)}$, where $\nu$ is the number of trials and $F_Q(\lambda)$ is the quantum Fisher information (QFI). For a pure family,
\begin{align}
F_Q(\lambda)=4\!\left(\langle \partial_\lambda \Psi | \partial_\lambda \Psi\rangle-|\langle \Psi|\partial_\lambda \Psi\rangle|^2\right).
\end{align}
Heisenberg scaling corresponds to $F_Q\sim \bar{n}^2$ (mean photon number $\bar n$), beyond the standard quantum limit (SQL) $F_Q\sim \bar{n}$.

\subsection*{Displacement phase sensing}
Estimate the phase $\varphi$ of the displacement, $\alpha=|\alpha|e^{i\varphi}$. Since $D(|\alpha|e^{i\varphi})=e^{i\varphi \hat n}D(|\alpha|)e^{-i\varphi \hat n}$, the parameter is encoded by the linear generator $\hat n$, and for a pure probe $F_Q(\varphi)=4\,\mathrm{Var}(\hat n)$. In the antisymmetric small-squeezing limit, the state approaches $D(\alpha)|2\rangle$ with leading-order moments
\[
\mathcal{N}_1\approx |\alpha|^2+2,\qquad
\mathcal{N}_2\approx |\alpha|^4+8|\alpha|^2+2,
\]
so using $\mathrm{Var}(\hat n)=\mathcal{N}_2+\mathcal{N}_1-\mathcal{N}_1^2$ one obtains
\begin{align}
F_Q(\varphi)=4\,\mathrm{Var}(\hat n)=20\,|\alpha|^2,
\end{align}
i.e.\ $F_Q(\varphi)\propto \bar n$ with $\bar n\approx |\alpha|^2+2$: displacement-phase sensing reaches the SQL.

\subsection*{Squeezing-angle sensing}
Now estimate a squeezing angle (e.g.\ $\theta_\xi$) in the \emph{undisplaced} antisymmetric state with $r=s\to 0$ and opposite axes ($\theta_\zeta=\theta_\xi+\pi$). Writing
\begin{align}
    |\xi\rangle=&\sum_{m\ge 0} c_m\,e^{i m \theta_\xi}\,r^{m}\,|2m\rangle+O(r^{m+2}),\nn\\
|\zeta\rangle=&\sum_{m\ge 0} c_m\,e^{i m (\theta_\xi+\pi)}\,r^{m}\,|2m\rangle+ O(r^{m+2}),
\end{align}
the difference $|\xi\rangle-|\zeta\rangle$ keeps only odd $m$ (vacuum cancels, $|2\rangle$ adds, $|4\rangle$ cancels, etc.). After normalization, up to $O(r^2)$,
\begin{align}
|\Psi\rangle \;\propto\;& |\xi\rangle-|\zeta\rangle
\;\Rightarrow\; \nn\\
|\Psi\rangle =& e^{i\theta_\xi}\!\left(\,|2\rangle \;+\; 
\varepsilon\,e^{i2\theta_\xi}|6\rangle \;+\; O(r^4)\right),
\end{align}
where 
\begin{align}
\varepsilon = \frac{c_3}{c_1}\,r^2 + O(r^4),
\end{align}
and $c_m=\sqrt{(2m)!}/(2^{m} m!)$. The global phase $e^{i\theta_\xi}$ is irrelevant; the \emph{first} $\theta_\xi$-dependent \emph{relative} phase arises at order $r^2$ in the $|6\rangle$ component. Using the pure-state formula,
\begin{align}
\frac{\partial}{\partial \theta_\xi}|\Psi\rangle
=& i\,2\,\varepsilon\,e^{i2\theta_\xi}|6\rangle + O(r^4),
\\
F_Q(\theta_\xi)
=& 4\!\left(\langle \partial_{\theta_\xi}\Psi|\partial_{\theta_\xi}\Psi\rangle-|\langle \Psi|\partial_{\theta_\xi}\Psi\rangle|^2\right)
= 16\,|\varepsilon|^2 + O(r^6). \nn
\end{align}
Therefore,
\begin{align}
F_Q(\theta_\xi)
= 16\,\Big|\frac{c_3}{c_1}\Big|^{\!2}\,r^4 \;+\; O(r^6)
\;\approx\; 10\,r^4 \;+\; O(r^6).
\end{align}
In particular, $F_Q(\theta_\xi)\to 0$ as $r\to 0$. Intuitively, the antisymmetric state tends to the pure Fock state $|2\rangle$, which is independent of the squeezing angle up to a global phase; sensitivity to $\theta_\xi$ enters only via higher-$m$ admixtures that scale as $r^2$.

\subsection*{Interference-enabled super-SQL scaling}

When a parameter \(\lambda\) is imprinted via \(U(\lambda)=e^{-i\lambda G}\), a pure probe satisfies
\(F_Q(\lambda)=4\,\mathrm{Var}(G)\).
For displacement-phase sensing the encoding
\(D(|\alpha|e^{i\varphi})=e^{i\varphi\hat n}D(|\alpha|)e^{-i\varphi\hat n}\)
is generated by the linear operator \(G=\hat n\), so \(F_Q=4\,\mathrm{Var}(\hat n)=\Theta(\bar n)\) (SQL).

A qualitatively different scaling arises for \emph{quadratic} Gaussian encodings, e.g.
\[
G_{\rm sq}=\tfrac12\!\left(e^{-i\theta}a^2+e^{i\theta}a^{\dagger 2}\right).
\]
For the displaced Janus (DJ) superposition \(\ket{\Psi}=\chi\ket{\alpha,\xi}+\eta\ket{\alpha,\zeta}\) with common \(\alpha\),
move the displacement onto the operators \(a\mapsto a+\alpha\). Using Eq.~\eqref{eq:Mk_final_compact},
all contributions to \(\mathrm{Var}(G_{\rm sq})\) become finite sums of off-diagonal moments
\(\mathcal M_k(\zeta,\xi,\alpha)\) that are governed by the universal series \(F_{p,q}(z)\) of Eq.~\eqref{eq:Fpq_def}
with \(z=\tanh r\,\tanh s\,e^{i(\theta-\varphi)}\) (Eq.~\eqref{z}). The large-squeezing asymptotics follow from
the singular structure \(F_{p,q}(z)=P_{p,q}(z)/(1-z)^{(p+q+1)/2}\) in Eq.~\eqref{eq:Ppq_def} as \(z\to 1\).

\emph{Notational caution (\(\Sigma\) vs.\ \(\Sigma^{-1}\)).}
Moments are obtained from the \emph{symmetric} characteristic function (CF), whose exponent depends on
\(\Sigma=\tfrac12(V_\zeta+V_\xi)\).
By contrast, the Wigner exponent uses \(\Sigma^{-1}\).
To keep the roles distinct, we use tildes for the CF/moment coefficients.

For the \emph{centered} Gaussians,
\begin{align}
    \chi_{\zeta\xi}(\lambda)=&\frac{\langle\zeta|D(\lambda)|\xi\rangle}{\langle\zeta|\xi\rangle} 
=\exp\!\left[-\tfrac12\,\lambda_R^{T}\Sigma\,\lambda_R\right],
\\
    \Sigma=&\tfrac12(V_\zeta+V_\xi),\quad
\lambda_R=(\sqrt2\,\Re\lambda,\sqrt2\,\Im\lambda)^T, \nn
\end{align}
which in complex form reads
\begin{align}
    -\tfrac12\,\lambda_R^{T}\Sigma\,\lambda_R
=& -\tilde A\,|\lambda|^2 - \tilde B\,\lambda^2 - \tilde B^{*}\,\lambda^{*2}
,\\
\tilde A=&\tfrac12(\Sigma_{11}+\Sigma_{22})b, \nn\\
\tilde B=&\tfrac14(\Sigma_{11}-\Sigma_{22}-2i\,\Sigma_{12}). \nn
\end{align}
The standard Weyl–CF derivative identities give the cross second moments
\begin{align}
\langle \zeta|a^2|\xi\rangle=&-2\tilde B^{*}\,\langle\zeta|\xi\rangle,\nn\\
\langle \zeta|a^{\dagger2}|\xi\rangle=&-2\tilde B\,\langle\zeta|\xi\rangle,\nn\\
\tfrac12\langle \zeta|\{a^\dagger,a\}|\xi\rangle=&\tilde A\,\langle\zeta|\xi\rangle,
\end{align}
and Wick/Isserlis (symmetric ordering) yields the cross fourth moment
\[
\langle \zeta|a^{\dagger2}a^2|\xi\rangle_{\!\rm sym}
=\big(2\tilde A^{\,2}+2|\tilde B|^{2}\big)\,\langle\zeta|\xi\rangle.
\]
These are precisely the combinations entering \(\mathrm{Var}(G_{\rm sq})\) after the displacement shift.

For large squeezing \(r_{\max}=\max\{r,s\}\) and generic axis mismatch (bounded away from \(\pi\ \mathrm{mod}\ 2\pi\)),
\(\tilde A,|\tilde B|=\Theta(e^{2r_{\max}})\) because they depend on \(\Sigma\) (not \(\Sigma^{-1}\)).
Meanwhile, the overlap (Eq.~\eqref{eq:M0}) satisfies \(|\langle\zeta|\xi\rangle|=\Theta(1)\) when \(|r-s|=O(1)\) and the
axes are not nearly opposite. Since \(\bar n\sim\sinh^2 r_{\max}=\Theta(e^{2r_{\max}})\),
the cross \emph{second} moments scale as \(\Theta(\bar n)\) and the cross \emph{fourth} moments as \(\Theta(\bar n^2)\).
Therefore
\[
F_Q(\text{squeezing parameter})=4\,\mathrm{Var}(G_{\rm sq})=\Theta(\bar n^2),
\]
i.e.\ Heisenberg scaling. By contrast, displacement-phase sensing (linear generator \(G=\hat n\)) remains SQL:
\(F_Q=\Theta(\bar n)\).

\emph{Edge cases.}
(i) The scaling requires only the \emph{combinations} \(\tilde A\) and \(|\tilde B|\) to behave as
\(\Theta(e^{2r_{\max}})\); no claim is made about each entry of \(\Sigma\) separately.
(ii) If \(\theta-\varphi\to\pi\) (mod \(2\pi\)), the overlap \(|\langle\zeta|\xi\rangle|\) (Eq.~\eqref{eq:M0}) decays,
suppressing interference; our claim assumes angles bounded away from that line.

\section{Experimental feasibility and measurement protocols} \label{sec:Exp_feas}

\subsection{Platforms and state preparation}
The displaced superposition targeted here is implementable with standard continuous–variable (CV) optics: two single–mode squeezed states from a below–threshold optical parametric oscillator/optical parametric amplifier (OPO/OPA), interfered on a phase–stable beam splitter (BS), followed by a common displacement and phase control. High–quality squeezed sources at $1064~\mathrm{nm}$ using periodically poled potassium titanyl phosphate (PPKTP) are well established; polarization squeezing of about $-3.6~\mathrm{dB}$ (anti–squeezing $\sim\!+9.4~\mathrm{dB}$) has been demonstrated \cite{Lassen2007PolSqueezing}, while single–mode quadrature squeezing up to $\sim\!15~\mathrm{dB}$ has been realized in subthreshold OPOs and $10$–$12~\mathrm{dB}$ is commonly achievable in well-engineered systems \cite{vahlbruch2016detection,vahlbruch2008observation,mehmet2011squeezed}. Integrated OPO platforms further reduce technical overheads and support \emph{Gaussian-state (covariance) tomography} on chip \cite{Park2024}.

Phase-stable combining is critical. ``Squeezed–state quantum averaging’’ provides an experimentally validated route to stabilize and coherently combine variance resources on a BS \cite{Lassen2010Averaging}. In our scheme, the \emph{relative squeezing phase} $\Delta\theta$ is the principal non–Gaussianity knob; the common coherent displacement $\alpha$ is implemented on a weakly reflecting BS using a phase-locked local oscillator (LO). Throughout we target $r\in[0.4,1.1]$ ($\sim\!3.5$–$9.5~\mathrm{dB}$) and $|\alpha|\in[0.5,2.0]$ in the signal mode—compatible with typical OPO output powers, high mode matching ($\gtrsim\!95\%$), and detector linearity. Beyond passive tolerance, CV \emph{erasure–correcting} strategies using only linear optics and Gaussian ancillae have been demonstrated to protect coherence against probabilistic photon loss \cite{Lassen2010NatPhoton}; such encodings are compatible with our resource and extend the effective loss window.

\subsection{Loss maps and robustness of non–Gaussian signatures}
Optical loss is modeled by a BS channel of transmissivity $\eta$ acting as $a\mapsto\sqrt{\eta}\,a+\sqrt{1-\eta}\,v$. At the level of quasiprobabilities, the output Wigner function is the Gaussian convolution
\begin{align}
W_{\eta}(\beta)=\frac{2}{\pi(1-\eta)}\!\int d^2\gamma~\exp\!\left[-\frac{2\,|\beta-\sqrt{\eta}\,\gamma|^2}{1-\eta}\right]\,W(\gamma),
\label{eq:WignerLossConv}
\end{align}
with $d^2\gamma$ the flat phase–space measure. We use the complex-plane normalization in which the vacuum has $W(0)=1/\pi$. For our displaced superposition, the interference term controlled by $\Delta\theta$ sets the location and extent of negativity and hence its susceptibility to convolution. We therefore use \emph{loss maps}
\[
N_{\min}(\eta;r,|\alpha|,\Delta\theta):=\min_{\beta\in\mathbb{C}} W_\eta(\beta),
\]
and also monitor the Kenfack–Życzkowski negativity metric \cite{Kenfack2004}. Representative, experimentally relevant trends are:
\begin{itemize}
\item For $r\!\approx\!0.7$ ($\sim\!6~\mathrm{dB}$), $|\alpha|\!\approx\!1.0$, $\Delta\theta\!\simeq\!\pi/2$, we find $N_{\min}\!\lesssim\!-0.05$ persisting up to $\eta\gtrsim0.6$, vanishing smoothly around $\eta\!\approx\!0.5\pm0.05$ depending on the displacement axis and reconstruction filter.
\item For $r\!\approx\!1.0$ ($\sim\!9~\mathrm{dB}$), fringes broaden and tolerate higher loss (similar negativity to $\eta\gtrsim0.7$), whereas very small $|\alpha|$ localizes negativity and makes it more fragile to blurring.
\end{itemize}
These maps convert measured insertion losses (escape efficiency, propagation, mode matching, detector quantum efficiency) into operating points $(r,|\alpha|,\Delta\theta)$ where negativity remains resolvable with available sampling. For channels with excess Gaussian noise, \emph{Gaussian adaptation} (pre/post-squeezing matched to the channel) can further protect the negative regions of $W$ \cite{Filip2013Gaussian}.

\subsection{\texorpdfstring{Measuring $W(\beta)$, $g^{(k)}(0)$, and factorial moments}{}}
\subsubsection*{Balanced homodyne tomography.}
We reconstruct $W(\beta)$ via phase-scanned balanced homodyne (BH) detection with maximum-likelihood (ML) or filtered back-projection estimators; see also standard tomography reviews \cite{Lvovsky_Raymer2009RMP}. With high-efficiency photodiodes and electronic-noise clearance $\gtrsim\!10$–$15~\mathrm{dB}$ below shot noise at the analysis band, $10^6$–$10^7$ quadrature samples over $\sim\!30$–$60$ LO phases typically resolve negativities at the $10^{-2}$ level; detector inefficiency is well modeled as an effective reduction of $\eta$ in Eq.~\eqref{eq:WignerLossConv} \cite{Breitenbach1997,vahlbruch2016detection}. 
\emph{Micro-protocol (inline):} (i) phase-lock the LO on a uniform grid of $30$–$60$ angles with visibility $\gtrsim\!95\%$; (ii) acquire $\sim\!2\times10^{4}$–$2\times10^{5}$ samples per angle; (iii) verify shot-noise calibration and electronic-noise clearance; (iv) reconstruct via ML or filtered back-projection with mild deconvolution; (v) validate by comparing the reconstructed covariance to independent variance scans.

\subsubsection*{Higher–order correlations and photon statistics.}
Closed-form expressions for the factorial moments $\langle (a^\dagger)^k a^k\rangle$ give $g^{(k)}(0)$ directly. Experimentally, one can access these with (i) \emph{time–multiplexed detection} (TMD) using avalanche photodiodes (APDs)—measuring normalized higher-order correlations (validated up to $k\!=\!8$ on coherent states and practical to $k\!\lesssim\!4$ for nonclassical sources) \cite{Avenhaus2010TMD_PRL,Achilles2003Fiber_assisted}—and (ii) \emph{photon–number–resolving} (PNR) detection: transition-edge sensors (TESs; $>\!95\%$ system detection efficiency in the visible/near-IR, approaching unity in optimized systems) \cite{Fukuda2011Titanium} and superconducting nanowire single-photon detectors (SNSPDs; $\sim\!98\%$ system detection efficiency at $1550~\mathrm{nm}$) \cite{Reddy2020Superconducting}. Parallel SNSPD architectures enable high-speed PNR in the $850$–$950~\mathrm{nm}$ range with high system detection efficiency (SDE) \cite{Los2024High-performance,Stasi2023,Zadeh2021}. For \emph{spatially uniform} loss, normally ordered factorial moments scale as $\eta^k$, so the normalized $g^{(k)}(0)$ is \emph{invariant} under loss; in practice, unequal channel efficiencies, APD afterpulsing/dead time, and TES/SNSPD saturation can break the simple $\eta^k$ scaling and must be calibrated out to keep uncertainties honest.

\subsection{Concrete operating windows and budgets}
To aid implementation, we list windows that simultaneously exhibit (i) Wigner negativity, (ii) antibunching, and (iii) metrological gain beyond the standard quantum limit (SQL):
\begin{enumerate}
\item \textbf{Negativity–robust:} $r\!\approx\!0.7$–$0.9$, $|\alpha|\!\approx\!1.0$–$1.5$, $\Delta\theta\!\approx\!\pi/2$, total $\eta\gtrsim0.6$. Expect $N_{\min}\!\sim\!-0.03$ to $-0.08$ resolvable by BH tomography with $10^6$–$10^7$ samples; $g^{(2)}(0)\!\approx\!0.7$–$0.9$ via TMD/PNR.
\item \textbf{Antibunching–dominant:} $r\!\approx\!0.5$–$0.7$, $|\alpha|\!\approx\!1.2$–$2.0$, $\Delta\theta$ tuned to shift fringes off the origin. Here $g^{(2)}(0)\!<\!1$ and $g^{(3)}(0)\!\lesssim\!1$ persist down to $\eta\!\sim\!0.5$, while Wigner negativity may be marginal.
\item \textbf{Heisenberg–scaling windows:} for $r\!\gtrsim\!0.9$ and $\Delta\theta$ maximizing the interference cross–term, the ideal-model quantum Fisher information (QFI) for small squeezing-parameter shifts shows $O(\bar n^2)$ scaling; under $\eta<1$, scaling softens yet retains a clear super-SQL advantage for $\eta\!\gtrsim\!0.7$ \cite{Giovannetti2006,Walschaers2021NonGaussian_PRXQ}.
\end{enumerate}
Insertion loss should be budgeted across OPO escape efficiency, propagation and coupling, interferometer visibility, and detector quantum efficiency. Stabilization/averaging across sources \cite{Lassen2010Averaging}, erasure–correcting encodings against probabilistic loss \cite{Lassen2010NatPhoton}, and high-efficiency PNR detection (TES/SNSPD) \cite{Fukuda2011Titanium,Reddy2020Superconducting,Los2024High-performance,Stasi2023} are concrete, demonstrated mitigations.

\subsection{Summary for experimentalists}
All required components—PPKTP squeezing, phase-stable interference and displacement, BH tomography, and TMD/PNR detection—have been demonstrated \cite{Lassen2007PolSqueezing,Lassen2010Averaging,Lassen2010NatPhoton,Breitenbach1997,Avenhaus2010TMD_PRL,Achilles2003Fiber_assisted,Fukuda2011Titanium,Stasi2023}. The non-Gaussian signatures emphasized here (negativity, antibunching/multiphoton suppression, and metrological QFI advantages) survive for realistic loss $\eta\!\sim\!0.5$–$0.7$ when $\Delta\theta$ is used as a tuning knob, and can be further protected by Gaussian adaptation and erasure–correcting strategies \cite{Filip2013Gaussian,Lassen2010NatPhoton}. This places the displaced superposition squarely within reach of present CV photonics.

\section{Conclusion and Outlook} \label{sec:conc}

In this work, we introduced and comprehensively analyzed the \emph{displaced Janus state}, a tunable and highly non-Gaussian continuous-variable (CV) resource. Building on a new family of \emph{Generalized Squeezing Polynomials}, we derived closed-form expressions for arbitrary-order diagonal moments $\mathcal{N}_k$, off-diagonal matrix elements $\mathcal{M}_k$, the general-order correlation functions $g^{(k)}(0)$, and the Wigner function—including its interference term that generates negativity. The resulting toolkit enables, for the first time, a systematic calculation of full photon-statistical properties for coherent superpositions of squeezed states and cleanly exposes how quantum interference shapes those statistics. We validated all limits (coherent state $r\!=\!0$, squeezed vacuum $\alpha\!=\!0$) and proved a precise \emph{$\alpha\!\to\!0$ recovery} of the undisplaced Janus state.

A key physical message is that displacement raises the minimal achievable second-order coherence, whereas the undisplaced antisymmetric Janus state (optimized at small squeezing) attains the fundamental floor $g^{(2)}(0)=\tfrac{1}{2}$, approaching the two-photon Fock state $|2\rangle$. In the small-squeezing regime, we showed that the $r^2$ correction cancels identically for equal-amplitude, opposite-axis superpositions, so the first non-vanishing deviations appear at $O(r^4)$; adding displacement lifts $g^{(2)}(0)$ above this floor in a controlled, analytic way. More broadly, the family realizes \emph{complementarity ad infinitum}: by steering the interference via $(\chi,\eta,\Delta\theta,\alpha)$, one can morph the extreme bunching of Gaussian constituents ($g^{(k)}\!\to\!\infty$) into perfect multi-photon suppression ($g^{(k>2)}\!\to\!0$), with the order $k$ as a programmable knob.

Operationally, we established the metrological value of the resource by analyzing the quantum Fisher information (QFI). For parameters encoded by a linear generator (number operator $\hat n$), such as displacement-phase sensing, the QFI obeys $F_Q=4\,\mathrm{Var}(\hat n)$ and therefore exhibits standard quantum limit (SQL) scaling. In contrast, for parameters encoded by quadratic Gaussian generators (e.g., squeezing strength/phase), interference-enhanced cross-moments in $\mathcal{M}_k$ boost $4\,\mathrm{Var}(G_{\rm sq})$ to scale as $O(\bar n^2)$, achieving Heisenberg-limit (HL) scaling in the ideal regime and retaining super-SQL advantages at realistic loss. This furnishes a clear bridge between structural non-Gaussianity (Wigner negativity from interference) and operational utility (QFI scaling) within a single analytic framework.

We also mapped a concrete path to near-term demonstrations. Using mature optical parametric oscillator/optical parametric amplifier (OPO/OPA) platforms with periodically poled potassium titanyl phosphate (PPKTP), phase-stable combining, common displacement, and standard balanced-homodyne tomography, the characteristic Wigner-function negativities and antibunching signatures are resolvable with current detectors. Time-multiplexed detection (TMD) and photon-number-resolving (PNR) devices—transition-edge sensors (TESs) and superconducting nanowire single-photon detectors (SNSPDs)—provide direct access to higher-order moments and $g^{(k)}(0)$. Our loss maps identify operating windows where negativity persists for total efficiency $\eta\!\sim\!0.5$–$0.7$, and where metrological gains beyond SQL are observable by tuning the relative squeezing phase.

The displaced Janus state is a versatile primitive for hybrid continuous-variable/discrete-variable (CV/DV) architectures~\cite{Andersen2015Hybrid}. Immediate extensions include: (i) multi-mode Janus states on temporal/frequency combs and in SU(1,1) interferometry; (ii) unequal-displacement superpositions and adaptive optimization of $(\chi,\eta,\Delta\theta,\alpha)$ against task-specific resource monotones (e.g., Wigner-logarithmic negativity and non-Gaussianity measures); (iii) interfacing with bosonic error-correcting codes (cat, binomial, and Gottesman–Kitaev–Preskill (GKP) codes) as on-demand non-Gaussian ancillae; and (iv) robustness studies under realistic phase/frequency noise, mode-mismatch, and finite-temperature backgrounds, guided by our closed-form $\mathcal{M}_k$ expressions. On the certification side, combining analytic $g^{(k)}(0)$ predictors with sample-efficient homodyne protocols and machine-learning-assisted inference could substantially reduce tomography overheads. 

Taken together, our results provide a rigorous blueprint—from exact analytics to concrete operating windows—for engineering, certifying, and deploying non-Gaussian resources tailored to computing and metrology tasks that provably surpass Gaussian-only paradigms~\cite{weedbrook2012gaussian,braunstein2005quantum}. We expect the displaced Janus platform to become a standard workhorse for studying and exploiting the interplay between interference, non-Gaussianity, and quantum advantage.

\section*{Acknowledgments}

I am grateful to Girish Agarwal, Marlan Scully, Bill Unruh, and Suhail Zubairy for discussions.  

This work was supported by the
Robert A. Welch Foundation (Grant No. A-1261) and the National Science Foundation (Grant No. PHY-2013771).

\appendix 

\onecolumngrid
\section{\texorpdfstring{Convergence, Numerical Evaluation, and Construction of $P_{p,q}(z)$}{}} \label{app:P_prop}

\subsection*{\texorpdfstring{Convergence of $F_{p,q}(z)$ and numerical evaluation}{}}
For $p,q\in\mathbb{Z}_{\ge 0}$ with $p\equiv q\ (\mathrm{mod}\,2)$ we define
\begin{align}
F_{p,q}(z)=\sum_{n=\big\lceil\tfrac{p}{2}\big\rceil}^\infty
\frac{(2n)!}{(2n-p)!}\,
\frac{(2n+q-p-1)!!}{(2n)!!}\,z^{\,n}.
\label{eq:Fpq_def_note}
\end{align}
A ratio–test estimate using $(2n)!/(2n-p)!\sim (2n)^p$ and
$(2n+q-p-1)!!/(2n)!!\sim C\,n^{(q-p-1)/2}$ shows the coefficients grow like
$n^{(p+q-1)/2}$, hence the power series has radius of convergence
\[
|z|<1\,.
\]
We use the principal branch for complex powers, so throughout
\(
F_{p,q}(z)=\dfrac{P_{p,q}(z)}{(1-z)^{(p+q+1)/2}}
\)
means \((1-z)^{-\nu}=\exp\!\big(-\nu\,\mathrm{Log}(1-z)\big)\) with the principal
$\mathrm{Log}$ (branch cut along $[1,\infty)$ on the real axis). Since $|z|<1$
for finite squeezing ($r,s<\infty$ imply $|z|=\tanh r\,\tanh s<1$), $F_{p,q}$ is analytic in the open unit disk and its only singularity on the closure is the algebraic singularity at $z=1$ prescribed by the factor $(1-z)^{-(p+q+1)/2}$.

\medskip
\noindent\textbf{Numerical evaluation.}
\begin{itemize}
\item \emph{Moderate $|z|$ (e.g.\ $|z|\le 0.9$).} Use the defining series \eqref{eq:Fpq_def_note} and truncate at $N$ where the first neglected term meets the tolerance; an a-posteriori bound is
\(
\sum_{n>N}\!|a_n z^n|\le \dfrac{|a_{N+1} z^{N+1}|}{1-|z|}.
\)
\item \emph{Near the unit circle ($0.9<|z|<1$).} Evaluate the polynomial
$P_{p,q}(z)$ via the recurrences (Eqs.~\eqref{eq:Poly_Rec1}–\eqref{eq:Poly_Rec3})
and Horner’s rule, then multiply by $(1-z)^{-(p+q+1)/2}$ on the principal branch.
The degree is $\deg P_{p,q}=\lfloor\tfrac{p+q}{2}\rfloor$, so cost is $O(pq)$ to generate and $O(p+q)$ to evaluate.
\item \emph{Extremely close to $z=1$.} Use the asymptotic form
\(
F_{p,q}(z)=P_{p,q}(1)\,(1-z)^{-(p+q+1)/2}\,[1+O(1-z)],
\)
with $P_{p,q}(1)$ obtained from the recurrences (no precision loss from cancellation).
\end{itemize}

\subsection*{\texorpdfstring{Construction and properties of $P_{p,q}(z)$}{}}
Define $P_{p,q}(z)=(1-z)^{(p+q+1)/2}F_{p,q}(z)$. The base case is $P_{0,0}(z)=1$.
Using the identities
\begin{align}
P_{p+1,q+1}(z) &= \Big((2p+q+1)z-p\Big)P_{p,q}(z)+2z(1-z)\,P'_{p,q}(z), \label{eq:Poly_Rec1_app}\\
P_{p,q+2}(z) &= \Big(2pz-(p-q-1)\Big)P_{p,q}(z)+2z(1-z)\,P'_{p,q}(z), \label{eq:Poly_Rec2_app}\\
P_{p+2,q}(z) &= P_{p+1,q+1}(z)+q\,(z-1)\,P_{p+1,q-1}(z), \label{eq:Poly_Rec3_app}
\end{align}
one proves by induction that $P_{p,q}$ is a polynomial for all same-parity pairs, and that
\[
\deg P_{p,q}=\tfrac{p+q}{2}.
\]
The symmetry
\begin{align}
P_{p,q}(z)=z^{(p-q)/2}\,P_{q,p}(z) \quad (p\equiv q\!\!\!\pmod{2})
\label{eq:Poly_Symm_app}
\end{align}
follows from the corresponding identity for $F_{p,q}$ and is useful to fill the lower triangle from the upper triangle.

\medskip
\noindent\textbf{Algorithm (dynamic programming).}
Represent each $P_{p,q}$ by its coefficient vector $\mathbf{c}^{(p,q)}=(c_0,\dots,c_d)$ with $d=\lfloor(p+q)/2\rfloor$. Respect the parity constraint.
\begin{enumerate}
\item Initialize with $\mathbf{c}^{(0,0)}=(1)$.
\item Sweep in increasing $p{+}q$. For each available $P_{p,q}$, form $P'_{p,q}$ and use \eqref{eq:Poly_Rec1_app} and \eqref{eq:Poly_Rec2_app} to populate $P_{p+1,q+1}$ and $P_{p,q+2}$ (within bounds).
\item Use \eqref{eq:Poly_Rec3_app} to populate $P_{p+2,q}$.
\item Apply the symmetry \eqref{eq:Poly_Symm_app} to fill the opposite triangle: $P_{p,q}(z)=z^{(p-q)/2}P_{q,p}(z)$ for $p\ge q$.
\end{enumerate}

\medskip
\noindent\textbf{Generation table.}
The rows below illustrate how representative entries are produced and also serve as a coefficient check for Table~\ref{tab:Ppq_even_even} and Table~\ref{tab:Ppq_odd_odd}.
\begin{center}
\renewcommand{\arraystretch}{1.25}
\begin{tabular}{|c|c|l|}
\hline
$(p,q)$ & Relation used & $P_{p,q}(z)$ \\
\hline
$(0,0)$ & base & $1$ \\
$(1,1)$ & \eqref{eq:Poly_Rec1_app} from $(0,0)$ & $z$ \\
$(0,2)$ & \eqref{eq:Poly_Rec2_app} from $(0,0)$ & $1$ \\
$(2,0)$ & symmetry from $(0,2)$ & $z$ \\
$(2,2)$ & \eqref{eq:Poly_Rec1_app} from $(1,1)$ & $2z^2+z$ \\
$(3,1)$ & \eqref{eq:Poly_Rec3_app} from $(2,2)$ \& $(2,0)$ & $3z^2$ \\
$(1,3)$ & symmetry from $(3,1)$ & $3z$ \\
$(3,3)$ & \eqref{eq:Poly_Rec1_app} from $(2,2)$ & $6z^3+9z^2$ \\
$(0,4)$ & \eqref{eq:Poly_Rec2_app} from $(0,2)$ & $3$ \\
$(4,0)$ & symmetry from $(0,4)$ & $3z^2$ \\
\hline
\end{tabular}
\end{center}

\medskip
\noindent
All entries in Table~\ref{tab:Ppq_even_even} and Table~\ref{tab:Ppq_odd_odd} were verified in two independent ways:
(i) generation by the recurrences \eqref{eq:Poly_Rec1_app}–\eqref{eq:Poly_Rec3_app} starting from $P_{0,0}=1$ and enforcing parity/symmetry, and
(ii) coefficient matching against the series definition \eqref{eq:Fpq_def_note} up to the displayed orders.

\section{Derivation of the Low-Order Matrix Elements} \label{app:low_M}

This appendix details how the explicit formulas \eqref{eq:M0}–\eqref{eq:M3} follow from the compact expression
\eqref{eq:Mk_final_compact} together with
\(
F_{p,q}(z)=P_{p,q}(z)/(1-z)^{(p+q+1)/2}
\)
from \eqref{eq:Ppq_def} and the entries of $P_{p,q}(z)$ listed in Table~\ref{tab:Ppq_even_even} and Table~\ref{tab:Ppq_odd_odd}.
Throughout we use the composite parameter
\(
z=\tanh r\,\tanh s\,e^{i(\theta-\phi)}
\)
from \eqref{z}, and the identity
\[
\frac{z}{\tanh s\,e^{-i\phi}}=\tanh r\,e^{i\theta},
\qquad
\alpha=|\alpha|e^{i\varphi}.
\]

\subsection*{Method}
Starting from \eqref{eq:Mk_final_compact},
\[
\mathcal{M}_k
=\frac{1}{\sqrt{\cosh r\,\cosh s}}
\sum_{\substack{0\le p,q\le k\\ p\equiv q\ (\mathrm{mod}\ 2)}}
\binom{k}{p}\binom{k}{q}\,
(\alpha^*)^{k-q}\alpha^{\,k-p}\,
\big(\tanh s\,e^{-i\phi}\big)^{\frac{q-p}{2}}\,
F_{p,q}(z),
\]
we proceed as follows:
\begin{enumerate}
\item Enumerate all same-parity pairs $(p,q)$ with $0\le p,q\le k$ and record the binomial weight $\binom{k}{p}\binom{k}{q}$.
\item Replace $F_{p,q}(z)$ by $P_{p,q}(z)/(1-z)^{(p+q+1)/2}$ using Table~\ref{tab:Ppq_even_even} and Table~\ref{tab:Ppq_odd_odd}.
\item Collect global factors and, when helpful, use
$\alpha^m(\alpha^*)^n=|\alpha|^{m+n}e^{i(m-n)\varphi}$
and
$z/(\tanh s e^{-i\phi})=\tanh r e^{i\theta}$ to simplify phases.
\item Factor the highest power of $(1-z)^{-1/2}$ common to all terms and combine the numerator into the final polynomial in $z$, $|\alpha|^2$, and the phase factors.
\end{enumerate}

\subsection*{\texorpdfstring{Case $k=0$}{}}
Only $(p,q)=(0,0)$ contributes with weight $1$ and
$F_{0,0}(z)=(1-z)^{-1/2}$. Thus
\[
\mathcal{M}_0
=\frac{1}{\sqrt{\cosh r\,\cosh s}}\,(1-z)^{-1/2},
\]
which equals \eqref{eq:M0} upon inserting $z$ from \eqref{z}.

\subsection*{\texorpdfstring{Case $k=1$}{}}
Allowed pairs: $(0,0)$ and $(1,1)$ with weights $1$ and $1$.
Using $F_{0,0}=(1-z)^{-1/2}$ and $F_{1,1}=z(1-z)^{-3/2}$:
\[
\mathcal{M}_1
=\frac{1}{\sqrt{\cosh r\,\cosh s}}
\Big[|\alpha|^2(1-z)^{-1/2}+z(1-z)^{-3/2}\Big]
=\frac{|\alpha|^2(1-z)+z}{\sqrt{\cosh r\,\cosh s}\,(1-z)^{3/2}},
\]
which reproduces \eqref{eq:M1}.

\subsection*{\texorpdfstring{Case $k=2$}{}}
Allowed pairs and binomial weights:
\[
(0,0):1,\quad (1,1):4,\quad (2,2):1,\quad (0,2):1,\quad (2,0):1.
\]
With
\[
F_{0,0}=(1-z)^{-1/2},\quad
F_{1,1}=z(1-z)^{-3/2},\quad
F_{2,2}=(2z^2+z)(1-z)^{-5/2},\quad
F_{0,2}=(1-z)^{-3/2},\quad
F_{2,0}=z(1-z)^{-3/2},
\]
the five contributions are
\[
\begin{aligned}
(0,0):\ &|\alpha|^4(1-z)^{-1/2},\\
(1,1):\ &4|\alpha|^2\,z(1-z)^{-3/2},\\
(2,2):\ &(2z^2+z)(1-z)^{-5/2},\\
(0,2):\ &|\alpha|^2\,\tanh s\,e^{\,i(2\varphi-\phi)}(1-z)^{-3/2},\\
(2,0):\ &|\alpha|^2\,\tanh r\,e^{\,i(\theta-2\varphi)}(1-z)^{-3/2},
\end{aligned}
\]
where the last line follows from $z/(\tanh s e^{-i\phi})=\tanh r e^{i\theta}$.
Factoring $(1-z)^{-{5/2}}$ yields
\[
\mathcal{M}_2
=\frac{(1-z)^2|\alpha|^4+(2z^2+z)
+(1-z)|\alpha|^2\Big(4z+\tanh s\,e^{i(2\varphi-\phi)}+\tanh r\,e^{i(\theta-2\varphi)}\Big)}
{\sqrt{\cosh r\,\cosh s}\,(1-z)^{5/2}},
\]
i.e. \eqref{eq:M2}.

\subsection*{\texorpdfstring{Case $k=3$}{}}
Allowed pairs and weights:
\[
\begin{aligned}
&\text{even-even: }(0,0):1,\ (0,2):3,\ (2,0):3,\ (2,2):9;\\
&\text{odd-odd: }(1,1):9,\ (1,3):3,\ (3,1):3,\ (3,3):1.
\end{aligned}
\]
Using
\[
\begin{aligned}
&F_{0,0}=(1-z)^{-1/2},\quad F_{0,2}=(1-z)^{-3/2},\quad F_{2,0}=z(1-z)^{-3/2},\quad F_{2,2}=(2z^2+z)(1-z)^{-5/2},\\
&F_{1,1}=z(1-z)^{-3/2},\quad F_{1,3}=3z(1-z)^{-5/2},\quad F_{3,1}=3z^2(1-z)^{-5/2},\quad F_{3,3}=(6z^3+9z^2)(1-z)^{-7/2},
\end{aligned}
\]
and simplifying phases with \eqref{z}, the eight terms are
\[
\begin{aligned}
&(0,0):\ |\alpha|^6(1-z)^{-1/2},\qquad\quad
(1,1):\ 9|\alpha|^4\,z(1-z)^{-3/2},\\
&(2,2):\ 9|\alpha|^2(2z^2+z)(1-z)^{-5/2},\quad
(3,3):\ (6z^3+9z^2)(1-z)^{-7/2},\\
&(0,2):\ 3|\alpha|^4\,\tanh s\,e^{i(2\varphi-\phi)}(1-z)^{-3/2},\quad
(2,0):\ 3|\alpha|^4\,\tanh r\,e^{i(\theta-2\varphi)}(1-z)^{-3/2},\\
&(1,3):\ 9|\alpha|^2\,z\,\tanh s\,e^{i(2\varphi-\phi)}(1-z)^{-5/2},\quad
(3,1):\ 9|\alpha|^2\,z\,\tanh r\,e^{i(\theta-2\varphi)}(1-z)^{-5/2}.
\end{aligned}
\]
Factoring $(1-z)^{-7/2}$ and grouping by $|\alpha|^{2m}$ gives
\[
\begin{aligned}
\mathcal{M}_3
=\frac{1}{\sqrt{\cosh r\,\cosh s}\,(1-z)^{7/2}}
\Big[&(1-z)^3|\alpha|^6+(6z^3+9z^2)\\
&+(1-z)^2|\alpha|^4\big(9z+3\tanh s\,e^{i(2\varphi-\phi)}+3\tanh r\,e^{i(\theta-2\varphi)}\big)\\
&+(1-z)|\alpha|^2\big(9(2z^2+z)+9z(\tanh s\,e^{i(2\varphi-\phi)}+\tanh r\,e^{i(\theta-2\varphi)})\big)\Big],
\end{aligned}
\]
which matches \eqref{eq:M3}.

\section{Cat, squeezed-cat, and displaced Janus states: how they are related and why Janus is distinct}
\label{app:cat-comparison}

This appendix situates the displaced Janus state alongside standard cat and squeezed-cat states, clarifying common structure and the essential differences that make Janus a distinct non-Gaussian resource.

\subsection{Definitions and immediate relations}

We fix single-mode notation with displacement $D(\alpha)$ and squeezing $S(\xi)$, where $\xi=r\,e^{i\theta}$. The usual even/odd cat states are
\begin{align}
\ket{C_{\pm}(\alpha)}=\mathcal{N}_{\pm}\bigl(\ket{\alpha}\pm\ket{-\alpha}\bigr), 
\qquad 
\mathcal{N}_{\pm}^{-2}=2\!\left(1\pm e^{-2|\alpha|^2}\right).
\end{align}
Applying a common squeezing gives the squeezed-cat
\begin{align}
\ket{SC_{\pm}(\xi,\alpha)}=S(\xi)\ket{C_{\pm}(\alpha)}
=\mathcal{N}_{\pm}\!\left(S(\xi)D(+\alpha)\ket{0}\pm S(\xi)D(-\alpha)\ket{0}\right).
\end{align}
By contrast, the displaced Janus state superposes two \emph{differently squeezed} Gaussians that share a common displacement:
\begin{align}
\ket{\Psi_{J}}=\mathcal{N}_{J}\!\left(\chi\,S(\xi)D(\alpha)\ket{0}+\eta\,S(\zeta)D(\alpha)\ket{0}\right),
\qquad 
\mathcal{N}_{J}^{-2}=|\chi|^{2}+|\eta|^{2}+2\,\Re\!\left[\chi^{*}\eta\,\braket{\zeta}{\xi}\right].
\label{eq:janus-def-app}
\end{align}
A key structural difference is thus: cats and squeezed-cats combine \emph{identically squeezed} components with \emph{opposite means} $\pm\alpha$, whereas Janus combines \emph{unequally squeezed} components with a \emph{common mean} $\alpha$.

Gaussian unitaries act on first and second moments by $\mu\mapsto S\mu+d$ and $V\mapsto SVS^{\mathsf T}$. A single Gaussian unitary cannot render two unequal covariances equal unless they were equal to begin with. Therefore a generic displaced Janus state with $V_{\xi}\neq V_{\zeta}$ is not Gaussian-unitarily equivalent to a (squeezed-)cat, whose two components share the same covariance.

\subsection{Wigner structure and interference mechanism}

For cats and squeezed-cats, interference fringes are created by spatial separation of the two Gaussian lobes, with fringes approximately orthogonal to the separation axis and spacing controlled by $|\alpha|$ (up to the global shear from $S(\xi)$). For Janus states, the lobes are co-located at the same mean $\mu=\sqrt{2}(\Re\alpha,\Im\alpha)$ and interference originates from the \emph{difference in quadratic phases} of the two Gaussians.

Writing the symplectic form $\Omega=\begin{psmallmatrix}0&1\\-1&0\end{psmallmatrix}$, the cross-Wigner contribution takes the form
\begin{align}
W_{\zeta,\xi,\alpha}(x)
=
\frac{1}{2\pi\sqrt{\det\Sigma}}\,
\exp\!\left[-\tfrac12(x-\mu)^{\mathsf T}\Sigma^{-1}(x-\mu)\right]\,
\cos\!\bigl((x-\mu)^{\mathsf T}K(x-\mu)+\arg C\bigr),
\label{eq:cross-wigner-app}
\end{align}
with $\Sigma=\tfrac12(V_{\xi}+V_{\zeta})$, 
\begin{align}
K=\tfrac12\,\Omega\bigl(V_{\zeta}^{-1}-V_{\xi}^{-1}\bigr),
\qquad 
C=\chi^{*}\eta\,\braket{\zeta}{\xi}.
\end{align}
The Janus fringes are therefore \emph{chirped and oriented} by the relative squeezing parameters $(r,s,\theta-\phi)$ rather than by spatial separation of the means. This explains the presence of Wigner negativity even when $|\alpha|$ is small or zero.

\subsection{Photon statistics and higher-order coherences}

At small $|\alpha|$, conventional cats interpolate toward even/odd Fock-like superpositions and can show antibunching for the odd branch, but the normalized $g^{(k)}(0)$ are driven primarily by displacement interference. In displaced Janus states, antibunching and multiphoton suppression are tunable through the \emph{internal} squeezing mismatch. In the antisymmetric small-squeezing limit we find exact cancellation of all $O(r^{2})$ terms in $g^{(k)}(0)$, leading to
\begin{align}
g^{(2)}(0)\rightarrow \tfrac12,
\qquad 
g^{(k>2)}(0)\rightarrow 0,
\end{align}
whereas a single squeezed vacuum shows $g^{(2)}(0)\gtrsim 3$ and $g^{(k>2)}(0)\to\infty$. This higher-order reversal encapsulates the Janus mechanism: quantum suppression emerges from interference between distinct quadratic phases rather than from large mean-field separation.

\subsection{Metrological and noise considerations}

For phase estimation generated by the number operator, standard cats can achieve Heisenberg scaling in the large-separation regime. Janus states can reach comparable scaling without large $|\alpha|$, for example in regimes approaching $D(\alpha)\ket{2}$, with quantum Fisher information governed by covariance mismatch. This allows access to Heisenberg-like sensitivity at lower photon number by engineering $(\xi,\zeta)$ instead of increasing displacement.

Regarding loss, cats encode interference primarily in the which-path degree of freedom set by $\pm\alpha$, so moderate loss quickly damps the central fringes. Janus interference is encoded mainly in the covariance mismatch, and moderate loss that reduces squeezing degrades but does not necessarily eliminate negativity at modest $|\alpha|$. For a fixed mean photon number, displaced Janus states can therefore be more tolerant to detector inefficiency than cats that rely on larger separations.

\subsection{Operational comparison and summary}

A fair comparison can match either mean photon number or a non-Gaussian resource monotone (e.g., Wigner-negativity volume) and then sweep free parameters. At fixed energy, Janus states offer: 
(i) direct, analytic control of $g^{(2)}(0)$ down to $1/2$ and $g^{(k>2)}(0)\to 0$ through $(r,s,\theta-\phi)$; 
(ii) interference-based negativity even at zero separation; 
(iii) closed-form factorial and cross moments enabling direct fits to experimental data.

In summary, displaced Janus states are not squeezed-cats in disguise. Their interference originates from a designed mismatch of covariances rather than from spatial separation, enabling Wigner negativity and strong higher-order coherence control at small displacement and modest energy, with practical implications for metrology and robustness to loss.

\section{Derivations for the Wigner Function}
\label{app:WignerDerivations}

\subsection{Cross-Wigner via the Symmetric Characteristic Function}
\label{app:crossWigner_charfn}

We use the symmetrically ordered characteristic function
\begin{align}
  \chi_\rho(\lambda)=\mathrm{Tr}\!\big[\rho\,D(\lambda)\big],\qquad
  W_\rho(\beta)=\frac{1}{2\pi^2}\!\int d^2\lambda\;
  e^{\beta\lambda^*-\beta^*\lambda}\,\chi_\rho(\lambda),
\end{align}
with $d^2\lambda=d(\Re\lambda)\,d(\Im\lambda)$, $\beta=(q+ip)/\sqrt{2}$, and $dq\,dp=2\,d^2\beta$. For the cross term we set
\begin{align}
  \rho = \ket{\alpha,\xi}\!\bra{\alpha,\zeta},\qquad
  W_{\zeta\xi,\alpha}(\beta)=\frac{1}{2\pi^2}\!\int d^2\lambda\;
  e^{\beta\lambda^*-\beta^*\lambda}\,
  \bra{\alpha,\zeta}D(\lambda)\ket{\alpha,\xi}.
\end{align}
Displacement covariance (Weyl relation) gives
\begin{align}
  D^\dagger(\alpha)\,D(\lambda)\,D(\alpha)=e^{\,\lambda\alpha^*-\lambda^*\alpha}\,D(\lambda)
  \;\Rightarrow\;
  \bra{\alpha,\zeta}D(\lambda)\ket{\alpha,\xi}
  = e^{(\lambda\alpha^*-\lambda^*\alpha)}\,\bra{\zeta}D(\lambda)\ket{\xi},
\end{align}
so $W_{\zeta\xi,\alpha}(\beta)=W_{\zeta\xi,0}(\beta-\alpha)$ and it suffices to treat the centered case $\alpha=0$.

Let $V_\xi$ and $V_\zeta$ be the covariance matrices of the pure, centered Gaussian states
$\ket{\xi}=S(\xi)\ket{0}$ and $\ket{\zeta}=S(\zeta)\ket{0}$ in the $(q,p)$ basis, and set
\begin{align}
  \Sigma \equiv \frac{V_\xi+V_\zeta}{2}.
\end{align}
For Gaussians, the cross characteristic function is Gaussian with the arithmetic–mean covariance:
\begin{align}
    \bra{\zeta}D(\lambda)\ket{\xi}
  = \braket{\zeta}{\xi}\;\exp\!\left[-\tfrac12\,\lambda_R^{\!T}\Sigma\,\lambda_R\right],
  \quad \lambda_R\equiv\begin{pmatrix}\sqrt2\,\Re\lambda\\ \sqrt2\,\Im\lambda\end{pmatrix}.
    \label{eq:crossCF_app}
\end{align}
The  overlap is
\begin{align}
  \braket{\zeta}{\xi}
  =\big[\cosh r\,\cosh s-\sinh r\,\sinh s\,e^{i(\theta-\varphi)}\big]^{-1/2}.
\end{align}

Evaluating the complex Gaussian integral yields
\begin{align}
  W_{\zeta\xi,0}(\beta)
  =\frac{\braket{\zeta}{\xi}}{2\pi}\,\sqrt{\det\Sigma^{-1}}\,
  \exp\!\Big[-A|\beta|^2+B\beta^2+B^*\beta^{*2}\Big],
\end{align}
where, writing $\Sigma^{-1}_{ij}$ for the entries of $\Sigma^{-1}$,
\begin{align}
    A=\tfrac12\!\left(\Sigma^{-1}_{11}+\Sigma^{-1}_{22}\right),\qquad
  B=\tfrac14\!\left(\Sigma^{-1}_{11}-\Sigma^{-1}_{22}-2i\,\Sigma^{-1}_{12}\right).
    \label{eq:AB_from_SigmaInv}
\end{align}
Finally, restore the displacement by $\beta\mapsto\beta-\alpha$:
\begin{align}
   W_{\zeta\xi,\alpha}(\beta)
  = \frac{\braket{\zeta}{\xi}}{2\pi}\,\sqrt{\det \Sigma^{-1}}\,
    \exp\!\left[
      -A\,|\beta-\alpha|^{2}
      + B\,(\beta-\alpha)^{2}
      + B^{*}\,(\beta^{*}-\alpha^{*})^{2}
    \right].
    \label{eq:cross_Wigner_complex_app}
\end{align}

\medskip
\noindent
\emph{Normalization.} Using $dq\,dp=2\,d^2\beta$ and the identity
$\int d^2\beta\,\exp(-A|\beta|^2+B\beta^2+B^*\beta^{*2})
=\pi/\sqrt{A^2-4|B|^2}=\pi/\sqrt{\det\Sigma^{-1}}$, one finds
\[
\int W_{\zeta\xi,\alpha}(\beta)\,dq\,dp
=2\!\int W_{\zeta\xi,0}(\beta)\,d^2\beta
=\braket{\zeta}{\xi},
\]
as required.

\medskip
\noindent
\emph{Checks.} If $\zeta=\xi$, then $\Sigma=V_\xi$ and the resulting single-state Wigner $W_{\xi\xi,\alpha}$ is (as required) a real function of phase space. The coefficient $B$ is in general complex (whenever the squeezing ellipse is rotated, $\Sigma^{-1}_{12}\neq0$), so the exponent contains the combination $-A|\delta\beta|^2 + B\delta\beta^2 + B^*\delta\beta^{*2}$; this structure guarantees $W(\beta)^*=W(\beta)$ for the self term. For a superposition, the isolated cross term can be complex; the \emph{full} Wigner is real once both conjugate cross terms are included.

\subsection{Derivation of the Cross-Wigner Exponent in Complex Coordinates}
\label{app:crossWigner_complexcoords}

Start from the quadratic form in real quadratures $\delta\mathbf{x}=(\delta q,\delta p)^T$:
\begin{align}
  Q = -\frac{1}{2}\left(
  \Sigma^{-1}_{11}\,(\delta q)^2 + \Sigma^{-1}_{22}\,(\delta p)^2 + 2\,\Sigma^{-1}_{12}\,\delta q\,\delta p\right).
\end{align}
Introduce complex coordinates
\begin{align}
  \delta\beta = \frac{\delta q + i\,\delta p}{\sqrt{2}},\qquad
  \delta q = \frac{\delta\beta+\delta\beta^*}{\sqrt{2}},\quad
  \delta p = \frac{\delta\beta-\delta\beta^*}{i\sqrt{2}}.
\end{align}
Using
\[
  |\delta\beta|^2=\tfrac12(\delta q^2+\delta p^2),\quad
  \delta\beta^2=\tfrac12(\delta q^2-\delta p^2+2i\,\delta q\,\delta p),\quad
  \delta\beta^{*2}=\tfrac12(\delta q^2-\delta p^2-2i\,\delta q\,\delta p),
\]
one obtains
\begin{align}
  Q = -A\,|\delta\beta|^2 - B\,(\delta\beta)^2 - B^*\,(\delta\beta^*)^2,
  \qquad
  A=\tfrac{1}{2}(\Sigma^{-1}_{11}+\Sigma^{-1}_{22}),\ 
  B=\tfrac{1}{4}(\Sigma^{-1}_{11}-\Sigma^{-1}_{22}-2i\,\Sigma^{-1}_{12}).
  \label{eq:Q_complexcoords}
\end{align}

\medskip\noindent\textit{Sign reconciliation.}
Eq.~\eqref{eq:Q_complexcoords} is the \emph{bare} quadratic kernel written from $-\tfrac12\,\delta\mathbf{x}^{T}\Sigma^{-1}\delta\mathbf{x}$ in complex coordinates. In the \emph{final} cross-Wigner function, Eq.~\eqref{eq:cross_Wigner_complex_app}, the exponent appears as
\(
  -A\,|\beta-\alpha|^2 + B\,(\beta-\alpha)^2 + B^*\,(\beta^*-\alpha^*)^2,
\)
with the opposite sign in front of the $B$/$B^*$ terms. This flip is produced by completing the square in the Gaussian Fourier integral over $\lambda$; the linear “source” shifts the saddle and generates a contribution that changes the sign of the quadratic $\beta^2$ and $\beta^{*2}$ pieces while preserving the $-A|\beta|^2$ term.

\twocolumngrid
\bibliographystyle{apsrev4-2}
\bibliography{SqueezingRef}

\end{document}